\documentclass{article}
\textheight23cm \textwidth16cm \hoffset-2cm \voffset-1.3cm
\parskip 2pt plus1pt minus1pt
\usepackage{amsmath,amsthm,amssymb}
\usepackage{array, longtable}
\newtheorem{Theorem}{Theorem}[section]
\newtheorem{lem}[Theorem]{Lemma}
\newtheorem{Remark}[Theorem]{Remark}

\newtheorem{Corollary}[Theorem]{Corollary}
\newtheorem{Proposition}[Theorem]{Proposition}

\newtheorem{Notation}[Theorem]{Notation}
\numberwithin{equation}{section}

\begin{document}

\title{Self-dual cyclic codes over finite chain rings\footnote{
 {\small Email addresses: bocong\_chen@yahoo.com (B. Chen),
  lingsan@ntu.edu.sg (S. Ling),
  zghui2012@126.com (G. Zhang).}}}

\author{Bocong Chen$^1$, ~San Ling$^1$, ~Guanghui Zhang$^2$}

\date{\small
${}^1$Division of Mathematical Sciences, School of Physical \& Mathematical Sciences,
         Nanyang Technological University, Singapore 637616, Singapore\\
${}^2$School of Mathematical Sciences,
Luoyang Normal University,
Luoyang, Henan, 471022, China}

\maketitle
\begin{abstract} Let $R$ be a finite commutative chain ring with  unique
 maximal ideal $\langle \gamma\rangle$,  and let $n$ be a positive integer coprime with the characteristic of $R/\langle \gamma\rangle$.
In this paper, the algebraic structure  of  cyclic codes
  of length $n$ over  $R$ is investigated.
  Some  new  necessary and sufficient conditions  for the existence of nontrivial self-dual cyclic codes are
 provided. An enumeration formula for the self-dual cyclic codes  is also studied.

\medskip
\textbf{Keywords:}~~Cyclic code, dual code,   self-dual cyclic code, chain ring.

\medskip
\textbf{2010 Mathematics Subject Classification:}~ 94B15; 11T71
\end{abstract}

\section{Introduction}
The study of  codes over finite rings has  grown
tremendously  since the seminal work of Hammons {\it et al}. It is  shown in \cite{Hammons} that some of the
best nonlinear codes  over $\mathbb{F}_2$ can be viewed as
linear codes over  $\mathbb{Z}_4$.
It was pointed out in \cite{Wood99,Wood08}  that only finite Frobenius rings
are suitable for coding alphabets,
in the sense that several fundamental properties of codes
over finite fields still hold for codes over such rings.
This has  motivated numerous  authors to research on  codes over finite chain rings,
as  chain rings  are Frobenius rings with good algebraic structures.

On the other hand, the class of cyclic codes plays a very significant role in the theory of error-correcting codes.
One is that they can be efficiently encoded  using shift registers.
There is a lot of literature  about cyclic codes over finite chain rings (e.g., see \cite{Abualrub}, \cite{Dinh04}-\cite{Dougherty}, \cite{Kai1},
\cite{Pless}-\cite{Sloane}).

Generally, cyclic codes over finite chain rings can be divided into two classes:
simple-root cyclic codes, if the code lengths are coprime with the characteristic of the ring; otherwise, we have the so-called
repeated-root cyclic codes.
In this paper, we study simple-root cyclic codes over finite chain rings.

Pless and Qian \cite{Pless} showed that   cyclic codes of odd length $n$ over $\mathbb{Z}_4$ have
generators of an interesting  form:
$\langle fh, 2gh\rangle$, where $f,g,h\in \mathbb{Z}_4[X]$ satisfy $fgh=X^n-1$.
Pless,  Sol\'{e} and Qian in \cite{Pless2} considered  existence conditions for  nontrivial
self-dual cyclic codes of odd length over $\mathbb{Z}_4$.
Results of \cite{Pless,Pless2} were then extended to
simple-root cyclic codes over $\mathbb{Z}_{p^m}$ \cite{Kanwar}.
Following that line of research, Wan continued to consider simple-root cyclic codes over Galois rings \cite{Wan99}.
Extending the main results of  \cite{Kanwar} and \cite{Wan99},
Dinh and L\'{o}pez-Permouth in \cite{Dinh04} completely described  simple-root cyclic codes over  a finite commutative chain ring $R$.
Several necessary and sufficient
conditions for the existence of nontrivial self-dual cyclic codes were provided.

Let $R$ be a finite commutative chain ring with  unique
maximal ideal $\langle \gamma\rangle$. Then $\langle \gamma\rangle$ is nilpotent and we denote its nilpotency index by $t$.
Let $n$ be a positive integer coprime
with the characteristic of $\mathbb{F}_q=R/\langle \gamma\rangle$.
First,   we generalize the methods  of \cite{Kai2} to obtain the
algebraic structure  of  cyclic codes of length $n$ over $R$, which is different from that given in \cite{Dinh04}.
Using this structure, we show that self-dual cyclic codes of length $n$ over $R$ exist if and only if
$t$ is even. Some  new  necessary and sufficient  conditions  for the existence of nontrivial self-dual cyclic codes are
also derived.
We show that, when the nilpotency index $t$ is even,
the number of self-dual cyclic codes
is fully determined by $|\Delta_n|$,
the number of reciprocal polynomial pairs in the monic irreducible  factorization of $X^n-1$ over $\mathbb{F}_q$.
The counting problem for $|\Delta_n|$ naturally reduces to an equivalent question about counting $|\Omega_n|$,
the number of self-reciprocal monic irreducible factors of $X^n-1$ over $\mathbb{F}_q$.
Write $n=2^mn'$, where $n'$ is odd.
It is shown that,
the problem of determining  the value of $|\Omega_{2^mn'}|$ can be entirely  reduced to those of
computing $|\Omega_{n'}|$ and $|\overline\Omega_{n'}|$,
where  $|\overline\Omega_{n'}|$ denotes the number of self-reciprocal monic irreducible factors of $X^{n'}-1$ over $\mathbb{F}_{q^2}$.
In particular,   very explicit formulas for the value of $|\Omega_{n}|$ are obtained when $n$ has exactly two  prime divisors.

This paper is organized as follows. After presenting  preliminary concepts and results  in Section 2,
we obtain structure theorems for cyclic codes of length $n$ over $R$ in Section 3. In Section 4, we provide some results concerning
the structure, existence conditions and enumeration formula for self-dual cyclic codes.
In Section 5, we study an   enumeration formula for the number of self-dual cyclic codes.

\section{Preliminaries}
A finite commutative ring with identity  is called a {\it finite chain ring} if it is  local and its unique maximal ideal is principal.
Throughout this paper,  $R$ denotes a finite  chain ring.
Let $\gamma$ be a fixed generator of the unique maximal ideal of $R$,
and assume that $\langle \gamma\rangle=\{r\gamma\,|\, r\in R\}$ is the principal ideal of $R$ generated by $\gamma$.
Then $\langle \gamma\rangle$ is nilpotent and we denote its   nilpotency index by  $t$.
We set $\mathbb{F}_q=R/\langle \gamma\rangle$. Here $\mathbb{F}_q$ is the finite field with $q=p^{\alpha}$ elements,
where $p$ is the characteristic of $\mathbb{F}_q$.

The natural surjective ring homomorphism from $R$ onto $\mathbb{F}_q$ is given as follows:
$$
^{-}:~~R~\longrightarrow ~\mathbb{F}_q, ~~~r\mapsto~\bar r, ~~~~\hbox{for any $r\in R$}.
$$
The map ${\bf``^-"}$ can be extended to a ring homomorphism from $R[X]$ onto $\mathbb{F}_q[X]$ in an obvious way:
$$
~~R[X]~\longrightarrow ~\mathbb{F}_q[X],~~~\sum\limits_{i=0}^na_iX^i~~\mapsto~~\sum\limits_{i=0}^n\overline{a_i}X^i,~~~
\hbox{for any $a_0,a_1, \cdots, a_{n}$ in $R$},
$$
which is also denoted by $``^-"$ for simplicity.

Two polynomials $f_1(X), f_2(X)$ in $R[X]$ are called {\it coprime} if there exist polynomials $u_1(X),u_2(X)$ in $R[X]$ such that
$u_1(X)f_1(X)+u_2(X)f_2(X)=1$. The following  result is very useful (e.g.,  see \cite[Lemma 2.8]{Norton} or \cite[Lemma 14.19]{Wan}).

\begin{lem}\label{1}
Let $f_1(X), f_2(X)$ be two polynomials in $R[X]$. Then $f_1(X), f_2(X)$ are coprime in  $R[X]$ if and only if
$\overline{f_1(X)}, \overline{f_2(X)}$ are coprime in $\mathbb{F}_q[X]$.
\end{lem}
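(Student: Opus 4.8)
The plan is to verify the two implications separately, with the nontrivial content lying entirely in the ``if'' direction. For the ``only if'' direction, I would simply push a B\'ezout identity forward through the reduction map. If $f_1(X), f_2(X)$ are coprime in $R[X]$, there exist $u_1(X),u_2(X)\in R[X]$ with $u_1f_1+u_2f_2=1$; applying the reduction homomorphism $\overline{(\,\cdot\,)}\colon R[X]\to\mathbb{F}_q[X]$ yields $\overline{u_1}\,\overline{f_1}+\overline{u_2}\,\overline{f_2}=1$, so $\overline{f_1(X)},\overline{f_2(X)}$ are coprime in $\mathbb{F}_q[X]$.

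For the converse, I would lift a B\'ezout identity backward. Assuming $\overline{f_1(X)},\overline{f_2(X)}$ are coprime, choose $v_1,v_2\in\mathbb{F}_q[X]$ with $v_1\overline{f_1}+v_2\overline{f_2}=1$, and take arbitrary coefficientwise preimages $u_1(X),u_2(X)\in R[X]$ of $v_1,v_2$ under the reduction map. Then $u_1f_1+u_2f_2$ reduces to $1$ in $\mathbb{F}_q[X]$, so it differs from $1$ by an element of the kernel of $\overline{(\,\cdot\,)}$; that is, $u_1f_1+u_2f_2=1+\gamma h$ for some $h(X)\in R[X]$, since the kernel consists precisely of those polynomials all of whose coefficients lie in $\langle\gamma\rangle$.

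The crux is then to show that $1+\gamma h$ is a unit in $R[X]$; once this is established, multiplying the identity above by its inverse exhibits $f_1,f_2$ as coprime. Here I would use the defining feature of the chain ring, namely that $\gamma$ is nilpotent of index $t$: every coefficient of $\gamma h$ lies in $\langle\gamma\rangle$, whence $(\gamma h)^t=\gamma^t h^t=0$ in $R[X]$ (using that $\gamma$ is central and $\gamma^t=0$). Thus $\gamma h$ is nilpotent, and one checks by the terminating geometric series that
\[
(1+\gamma h)\sum_{i=0}^{t-1}(-1)^i(\gamma h)^i = 1+(-1)^{t-1}(\gamma h)^t = 1,
\]
so $1+\gamma h$ is invertible with inverse $\sum_{i=0}^{t-1}(-1)^i(\gamma h)^i\in R[X]$.

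I would single out the invertibility of $1+\gamma h$ as the main obstacle, since it is the only step where the ring structure (as opposed to formal manipulation of the reduction map) is genuinely used; everything else is routine bookkeeping with the homomorphism $\overline{(\,\cdot\,)}$. The argument rests essentially on nilpotency, which is exactly what the chain-ring hypothesis supplies, so I would not expect the statement to survive over an arbitrary finite commutative ring lacking this feature.
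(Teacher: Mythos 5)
Your proof is correct. The paper does not prove this lemma at all---it cites it as a known result (Norton--S\u{a}l\u{a}gean, Lemma 2.8; Wan, Lemma 14.19)---and your argument is precisely the standard one found in those references: push the B\'ezout identity forward for the easy direction, and for the converse lift it to $u_1f_1+u_2f_2=1+\gamma h$ and invert $1+\gamma h$ via the terminating geometric series, using $\gamma^t=0$. You have also correctly identified the invertibility of $1+\gamma h$ as the one step where the chain-ring (indeed, merely local-with-nilpotent-maximal-ideal) hypothesis is genuinely used.
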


A polynomial $f(X)\in R[X]$ is
said to be {\it basic irreducible} if $\overline{f(X)}$ is irreducible in $\mathbb{F}_q[X]$.
A polynomial $f(X)\in R[X]$ is called {\it regular} if it is not a zero divisor.
Clearly,  monic polynomials are regular polynomials.
A polynomial over a field is called {\it square free} if it has no multiple irreducible divisors in its decomposition.

Hensel's Lemma \cite[Theorem XIII.4]{McDonald} plays a very significant role in
the study of finite chain rings as well as codes  over finite chain rings.
Using Hensel's Lemma, it is easy to get the next result given in \cite[Lemma 2.3]{Kai2}.
\begin{lem}\label{2}
Let $f$  be a monic polynomial over $R$ such that $\overline{f}$ is square free.
If $\overline{f}=g_1g_2\cdots g_s$ is the unique factorization into a product of pairwise
coprime monic irreducible polynomials in $\mathbb{F}_q[X]$, then there exists a unique family of pairwise
coprime monic basic irreducible polynomials $f_1, f_2,\cdots,f_s$ over $R$ such that
$f=f_1f_2\cdots f_s$ and $\overline{f_i}=g_i$ for $1\leq i\leq s$.
\end{lem}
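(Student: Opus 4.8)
The plan is to reduce everything to Hensel's Lemma and run an induction on the number $s$ of distinct irreducible factors of $\overline{f}$, with the two‑factor form of Hensel's Lemma serving as the engine at each step.

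For the base case $s=1$ I would simply observe that $\overline{f}=g_1$ is irreducible, so $f$ itself is monic and basic irreducible, and $f_1:=f$ works; uniqueness is vacuous. For the inductive step I would write $\overline{f}=g_1\cdot(g_2\cdots g_s)$ and note that, because $\overline{f}$ is square free, the $g_i$ are distinct monic irreducibles, so $g_1$ is coprime to $g_2\cdots g_s$ in $\mathbb{F}_q[X]$. Hensel's Lemma then lifts this to a factorization $f=f_1 h$ into coprime monic polynomials over $R$ with $\overline{f_1}=g_1$ and $\overline{h}=g_2\cdots g_s$. Since $\overline{h}$ is again square free, the induction hypothesis applied to $h$ yields $h=f_2\cdots f_s$ with the stated properties, and multiplying through gives $f=f_1f_2\cdots f_s$. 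Each $f_i$ is basic irreducible because its reduction $g_i$ is irreducible; and I would use Lemma \ref{1} to upgrade the pairwise coprimality of the $g_i$ in $\mathbb{F}_q[X]$ (automatic, being distinct irreducibles) to pairwise coprimality of the $f_i$ in $R[X]$.

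For uniqueness I would take a second factorization $f=f_1'\cdots f_s'$ of the same type with $\overline{f_i'}=g_i$ and compare it with $f=f_1\cdots f_s$. Both express $f$ as a product of a lift of $g_1$ (namely $f_1$, resp. $f_1'$) and a lift of $g_2\cdots g_s$ (the remaining product), with the two factors coprime in each case by Lemma \ref{1}. The uniqueness clause of Hensel's Lemma then forces $f_1=f_1'$ and $f_2\cdots f_s=f_2'\cdots f_s'$, after which the induction hypothesis closes the argument.

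I do not expect a genuine obstacle: the whole content is supplied by Hensel's Lemma. The only steps demanding care are verifying, at each split, that the factor peeled off is coprime to the remaining product — this is precisely where the square‑freeness hypothesis on $\overline{f}$ enters, ensuring the $g_i$ are distinct and hence pairwise coprime — and correctly passing coprimality back and forth between $R[X]$ and $\mathbb{F}_q[X]$ through Lemma \ref{1}, both to establish pairwise coprimality of the lifts and to set up the uniqueness comparison.
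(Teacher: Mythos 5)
Your proof is correct, and it follows exactly the route the paper indicates: the paper does not write out a proof but simply invokes Hensel's Lemma (citing \cite[Theorem XIII.4]{McDonald} and \cite[Lemma 2.3]{Kai2}), and your induction on $s$ using the two-factor Hensel lifting, with Lemma~\ref{1} transporting coprimality between $R[X]$ and $\mathbb{F}_q[X]$, is the standard way to carry that out. Both the existence and uniqueness arguments are sound as written.
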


In the rest of this section,  we recall some notations and
basic facts about  codes over rings.
Let $n$ be a positive integer.
A {\it code} $C$ of length $n$ over $R$ is a nonempty subset of $R^n$.
If,  in addition,   $C$  is  an $R$-submodule of $R^n$, then $C$ is called a {\em linear code}.
A linear code $C$ of length $n$ over $R$ is called
{\em cyclic} if $(c_{n-1}, c_0,\cdots,c_{n-2})\in C$
for every $(c_{0}, c_1,\cdots,c_{n-1})\in C$.

Each codeword $c=(c_0, c_1,\cdots, c_{n-1})$ is customarily identified with its polynomial representation
$c(X)=c_0+c_1X+\cdots+c_{n-1}X^{n-1}$. In this way, any cyclic code of length $n$ over $R$ is identified with exactly one ideal of
the quotient algebra $R[X]/\langle X^n-1\rangle$.

For any cyclic  code $C$ of length $n$ over $R$, the {\it dual code of $C$}
is defined as $C^\perp=\{u\in R^n\,|\,u\cdot v=0, ~\mbox{for any $v\in C$}\}$,
where $u\cdot v$ denotes the standard Euclidean inner product of $u$ and $v$ in $R^n$.
The code $C$ is said to be {\em self-orthogonal}
if $C\subseteq C^\perp$,  and {\it self-dual} if $C=C^\perp$.
It turns out that   the dual of a cyclic code is again a cyclic code.
The following result is well known (e.g.,  see \cite[Proposition 2.3]{Dinh10}).

\begin{lem}
Let  $R$ be a finite chain ring. Then the number of codewords in any
linear code $C$ of length $n$ over $R$ satisfies $|C|\cdot|C^\perp|=|R|^n$.
\end{lem}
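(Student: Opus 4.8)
The plan is to realize the Euclidean pairing on $R^n$ as a perfect pairing of finite abelian groups and then invoke the elementary duality $|C|\cdot|C^\circ|=|G|$ valid for any subgroup $C$ of a finite abelian group $G$. The bridge between the ring-theoretic dual $C^\perp$ and the character-theoretic annihilator is provided by the fact that a finite chain ring is a Frobenius ring.

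First I would establish that $(R,+)$ admits a \emph{generating character}: a group homomorphism $\chi\colon (R,+)\to\mathbb{C}^\times$ whose kernel contains no nonzero ideal of $R$, equivalently such that $a\mapsto\chi_a$, with $\chi_a(r):=\chi(ar)$, is an isomorphism $R\xrightarrow{\sim}\widehat{R}$ onto the character group. This is where the chain structure enters: the unique minimal ideal (socle) is $\langle\gamma^{t-1}\rangle$, a simple module isomorphic to $R/\langle\gamma\rangle=\mathbb{F}_q$, so $\operatorname{soc}(R)\cong R/\operatorname{rad}(R)$ and the ring is Frobenius, whence such a $\chi$ exists. Concretely one takes $\chi$ nontrivial on $\langle\gamma^{t-1}\rangle$; its kernel then meets no nonzero ideal, because every nonzero ideal $\langle\gamma^i\rangle$ contains the socle.

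Next I would extend $\chi$ coordinatewise to the pairing $\langle u,v\rangle:=\chi(u\cdot v)$ on $R^n\times R^n$. The generating property of $\chi$ makes this pairing nondegenerate, so $v\mapsto\langle -,v\rangle$ is an isomorphism $R^n\xrightarrow{\sim}\widehat{R^n}$. I would then check that for a linear code $C$ the ring-theoretic dual coincides with the annihilator of $C$ in $\widehat{R^n}$: if $\chi(u\cdot v)=1$ for all $u\in C$, then replacing $u$ by $ru$ (legitimate since $C$ is an $R$-submodule) gives $\chi\bigl(r(u\cdot v)\bigr)=1$ for every $r\in R$, whence $u\cdot v=0$ by the generating property; the converse is trivial. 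Thus $C^\perp=\{v\in R^n:\langle u,v\rangle=1\ \text{for all}\ u\in C\}$.

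Finally, under the isomorphism $R^n\cong\widehat{R^n}$ the set $C^\perp$ corresponds exactly to the annihilator subgroup $C^\circ=\{\psi\in\widehat{R^n}:\psi|_C\equiv1\}$. For any subgroup $C$ of a finite abelian group $G$ one has $\widehat{G/C}\cong C^\circ$ together with $\widehat{G}\cong G$, giving $|C^\circ|=|G|/|C|$; applying this with $G=R^n$ yields $|C^\perp|=|R|^n/|C|$, that is, $|C|\cdot|C^\perp|=|R|^n$. The only substantive point—the main obstacle—is the first step, the existence of the generating character, which is precisely the assertion that a finite chain ring is Frobenius; everything afterwards is formal character theory. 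An alternative, character-free route would decompose $C$ into cyclic summands $R/\langle\gamma^{i}\rangle$ via the module structure theorem over the principal ideal ring $R$ and track $|C^\perp|$ directly, but the pairing argument is cleaner.
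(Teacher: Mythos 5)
Your proof is correct. Note that the paper does not actually prove this lemma at all --- it simply quotes it as well known, citing \cite[Proposition 2.3]{Dinh10} --- so there is no in-paper argument to compare against; what you have written is essentially the standard proof that sits behind that citation (and behind the remark in the introduction that Frobenius rings are the right alphabets). Your chain of reasoning is sound at every step: the socle $\langle\gamma^{t-1}\rangle$ is the unique minimal ideal since the ideals of a chain ring form the chain $R\supset\langle\gamma\rangle\supset\cdots\supset\langle\gamma^{t}\rangle=0$, so a character nontrivial on the socle has kernel meeting no nonzero ideal; injectivity of $a\mapsto\chi_a$ plus $|R|=|\widehat R|$ gives the self-duality $R\cong\widehat R$; the $R$-submodule property of $C$ is exactly what lets you upgrade $\chi(u\cdot v)=1$ to $\chi\bigl(r(u\cdot v)\bigr)=1$ for all $r$ and hence to $u\cdot v=0$, identifying $C^\perp$ with the character annihilator $C^\circ$; and $|C^\circ|=|G|/|C|$ is elementary duality for finite abelian groups. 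The one thing worth making explicit if you wrote this out in full is the classification of ideals of a finite chain ring as powers of $\langle\gamma\rangle$, which you use implicitly when asserting that every nonzero ideal contains the socle. Your suggested alternative (decomposing $C$ into cyclic summands $R/\langle\gamma^i\rangle$ and computing $|C^\perp|$ directly) is also viable and closer in spirit to how the present paper computes $|C|$ and $|C^\perp|$ for cyclic codes in Theorem 3.3 and Lemma 4.2, but for general linear codes the character-theoretic route is indeed cleaner.
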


\section{Structure of cyclic codes over finite chain rings}
Starting from this section till the end of this paper, we always assume that $n$ is a positive integer coprime with the characteristic of $\mathbb{F}_q$.
We set $\mathcal{R}_n:=R[X]/\langle X^{n}-1\rangle$. Recall that $R$ is a finite chain ring with maximal
ideal $\langle \gamma\rangle$,  and that $\mathbb{F}_q=R/\langle \gamma\rangle$ is the residue field of order $q=p^\alpha$.
We  adopt the following notations.

\begin{Notation}
Let $r_0$ be an element in  $R$ such that $r_0=1+\mu\gamma$, where $\mu$ is a unit in $R$.
It follows that  $\overline {X^n-r_0}=X^n-1$ in $\mathbb{F}_q[X]$.
Since  $q$ is coprime with $n$,
the irreducible factors of $X^n-1$ in $\mathbb{F}_q[X]$ can be described by the $q$-cyclotomic cosets.
Let $I$ be a fixed complete set of  representatives of  all $q$-cyclotomic cosets modulo $n.$
Then the polynomial $X^n-1$ factors uniquely into pairwise coprime monic  irreducible
polynomials in $\mathbb{F}_q[X]$ as $X^n-1=\prod\limits_{i\in I}h_i$ (e.g.,  see \cite[Theorem 4.1.1]{Huffman}).
Using Lemma~\ref{2},
$X^n-r_0$ has a unique decomposition as a product	$\prod\limits_{i\in I}f_i$ of
pairwise coprime monic basic irreducible
polynomials in $R[X]$ with $\bar f_i=h_i$ for each $i \in I$.
Using Lemma~\ref{2} again,
$X^n-1\in R[X]$ also has a unique decomposition as a
product $\prod\limits_{i \in I}g_i$ of pairwise coprime
 monic basic irreducible polynomials in $R[X]$ with  $\bar g_i=h_i$ for each $i \in I$.

For any commutative ring $S$, elements $s_1, s_2\in S$ are said to be associates
if there is a unit $\lambda\in S$ with  $\lambda s_1=s_2$.
For a monic  polynomial $h(X)$ of degree $k$ in $S[X]$ with $h(0)$ being a unit in $S$, its  reciprocal
polynomial $h(0)^{-1}X^kh(X^{-1})$  is denoted by $h(X)^*$.
Note that $h(X)^*$ is a monic polynomial over $S$.
Following \cite{Jia},
if $h(X)=h(X)^*$,
then $h(X)$ is said to be self-reciprocal over $S$; otherwise, we say that
$h(X)$ and  $h(X)^*$ form a reciprocal polynomial pair.
For a polynomial $f(X)\in S[X]$  with leading coefficient
$a_n$ being a unit of $S$,  let $\hat f(X)=a_n^{-1}f(X)$, which is a monic  polynomial over
$S$.
\end{Notation}

\begin{lem} {\rm (\cite[Lemma 3.1]{Dinh04})}\label{Dinhlem}
Let $R$ be a finite chain ring with  maximal
ideal $\langle \gamma\rangle$, and let $t$ be the nilpotency index of $\gamma$. If $g$ is a  basic
irreducible polynomial of the ring $R[X]$, then $R[X]/\langle g\rangle$ is also a finite chain ring,
whose maximal ideal is generated by $\gamma+\langle g\rangle$. The nilpotency index of $\gamma+\langle g\rangle$ is equal to $t$.
\end{lem}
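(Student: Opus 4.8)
The plan is to show that $R[X]/\langle g\rangle$ is a finite chain ring by exhibiting that it is local with a principal maximal ideal, and to track the nilpotency index along the way. First I would form the reduction map modulo $\gamma$. Since $g$ is basic irreducible, $\overline{g}$ is irreducible in $\mathbb{F}_q[X]$, so $\mathbb{F}_q[X]/\langle\overline{g}\rangle$ is a field; call it $\mathbb{F}$. I would verify that the composite of the natural surjection $R[X]\to R[X]/\langle g\rangle$ with reduction gives a well-defined surjective ring homomorphism $R[X]/\langle g\rangle\to \mathbb{F}$ whose kernel is precisely the image of $\langle \gamma\rangle$ in $R[X]/\langle g\rangle$, i.e. the ideal $\langle\gamma+\langle g\rangle\rangle$. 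Because the quotient by this ideal is a field, the ideal is maximal; the main point to establish is that it is in fact the \emph{unique} maximal ideal.

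To see locality, I would argue that every element of $R[X]/\langle g\rangle$ outside $\langle\gamma+\langle g\rangle\rangle$ is a unit. Take a representative $f(X)\in R[X]$ whose class lies outside this ideal; then $\overline{f}\notin\langle\overline{g}\rangle$, so $\overline{f}$ and $\overline{g}$ are coprime in the field $\mathbb{F}_q[X]$. By Lemma~\ref{1}, $f$ and $g$ are coprime in $R[X]$, meaning $u_1 f+u_2 g=1$ for some $u_1,u_2\in R[X]$; reducing modulo $g$ shows $f(X)+\langle g\rangle$ is a unit. Hence the non-units form the single ideal $\langle\gamma+\langle g\rangle\rangle$, so $R[X]/\langle g\rangle$ is local with this principal maximal ideal, and is therefore a finite chain ring.

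For the nilpotency index, I would compute the powers of the maximal ideal $\mathfrak{m}=\langle\gamma+\langle g\rangle\rangle$. Since $\mathfrak{m}^k$ is generated by $\gamma^k+\langle g\rangle$, the statement $\mathfrak{m}^k=0$ is equivalent to $\gamma^k\in\langle g\rangle$ in $R[X]$. Clearly $\mathfrak{m}^t=0$ because $\gamma^t=0$ in $R$. The reverse inequality — that $t$ is the \emph{smallest} such exponent — is where I expect the only real subtlety: I must show $\gamma^{t-1}\notin\langle g\rangle$, i.e. that $\gamma^{t-1}$ is not a multiple of $g$ in $R[X]$. Here I would use that $g$ is monic (of degree at least $1$, since $\overline{g}$ is a genuine irreducible polynomial) so that division by $g$ is available; a nonzero multiple $g\cdot h$ of the monic polynomial $g$ either has positive degree or is $0$, whereas $\gamma^{t-1}$ is a nonzero constant, forcing $h=0$ and hence $\gamma^{t-1}=0$, contradicting minimality of $t$ in $R$. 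This degree argument pinpoints that the nilpotency index of $\mathfrak m$ equals $t$, completing the proof.
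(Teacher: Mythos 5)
Your argument is correct and complete. Bear in mind that the paper itself offers no proof of this lemma --- it is imported by citation from \cite[Lemma 3.1]{Dinh04} --- so there is no in-paper argument to compare against; your reconstruction follows the standard route. Concretely: the reduction map $R[X]/\langle g\rangle\to\mathbb{F}_q[X]/\langle\overline{g}\rangle$ you construct is precisely the homomorphism $\rho$ that the paper later invokes (again without reproving it) in the proof of Lemma~\ref{lemma13}; the coprimality step via Lemma~\ref{1} correctly shows that every element outside $\langle\gamma+\langle g\rangle\rangle$ is a unit, and since that ideal is proper (its quotient is a field) the ring is local with principal maximal ideal; and your degree argument for $\gamma^{t-1}\notin\langle g\rangle$ is valid because $g$ is monic, so $\deg(gh)=\deg g+\deg h$ for $h\neq 0$ even over a ring with zero divisors. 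The one hypothesis you should surface explicitly is that $g$ is monic: the statement only says ``basic irreducible,'' but monicity is what underwrites both the degree argument and the finiteness of $R[X]/\langle g\rangle$ (a free $R$-module of rank $\deg g$), and every $g$ to which the lemma is applied in the paper is indeed monic.
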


The next lemma shows that the element $f_i+\langle g_i\rangle$
generates the maximal ideal of $R[X]/\langle g_i\rangle$.

\begin{lem}\label{lemma1}
Let the notation be as above. Then,   for each $i \in I$,  we have that
$\big\langle f_i^{k_i}+\langle g_i\rangle\big\rangle=\big\langle \gamma^{k_i}+\langle g_i\rangle\big\rangle$
as ideals in $R[X]/\langle g_i\rangle$,
where $k_i$ is any nonnegative integer.
\end{lem}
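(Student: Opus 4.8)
The plan is to reduce everything to the single case $k_i=1$, by showing that in the ring $S_i:=R[X]/\langle g_i\rangle$ the image of $f_i$ is an \emph{associate} of the image of $\gamma$. Once $f_i+\langle g_i\rangle = u\cdot(\gamma+\langle g_i\rangle)$ for some unit $u$ of $S_i$, raising to the $k_i$-th power gives $f_i^{k_i}+\langle g_i\rangle = u^{k_i}\cdot(\gamma^{k_i}+\langle g_i\rangle)$ with $u^{k_i}$ again a unit, and multiplying a generator by a unit does not change the ideal it generates; this yields the claimed equality for every nonnegative integer $k_i$ (the case $k_i=0$ being the trivial one where both ideals are all of $S_i$). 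By Lemma~\ref{Dinhlem}, $S_i$ is a finite chain ring with maximal ideal $\langle\gamma+\langle g_i\rangle\rangle$, so I may use that an element of $S_i$ is a unit exactly when it lies outside this maximal ideal, equivalently when its image in the residue field $S_i/\langle\gamma+\langle g_i\rangle\rangle\cong\mathbb{F}_q[X]/\langle h_i\rangle$ is nonzero.

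The key computation is to subtract the two decompositions recorded in the Notation. From $X^n-r_0=\prod_{j\in I}f_j$ and $X^n-1=\prod_{j\in I}g_j$, together with $r_0=1+\mu\gamma$, I obtain
$$\prod_{j\in I}f_j-\prod_{j\in I}g_j=(X^n-r_0)-(X^n-1)=-\mu\gamma.$$
Reducing this identity modulo $g_i$ inside $S_i$, the second product vanishes because $g_i$ is one of its factors, leaving
$$\Big(f_i+\langle g_i\rangle\Big)\prod_{j\neq i}\Big(f_j+\langle g_i\rangle\Big)=-\mu\gamma+\langle g_i\rangle.$$
This is the crucial relation: it already exhibits a unit multiple of $\gamma$ as a product in which $f_i+\langle g_i\rangle$ appears as a factor.

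It then remains only to verify that the cofactor $\prod_{j\neq i}\big(f_j+\langle g_i\rangle\big)$ is a unit in $S_i$. For this I would invoke Lemma~\ref{1}: the $h_j$ are pairwise coprime irreducibles, so $\prod_{j\neq i}h_j$ and $h_i$ are coprime in $\mathbb{F}_q[X]$, whence $\prod_{j\neq i}f_j$ and $g_i$ are coprime in $R[X]$; a B\'ezout identity reduced modulo $g_i$ then exhibits $\prod_{j\neq i}\big(f_j+\langle g_i\rangle\big)$ as invertible. (Equivalently, its image in the residue field is $\prod_{j\neq i}h_j \bmod h_i\neq 0$.) Since $-\mu$ is a unit of $R$, the displayed relation now shows that $f_i+\langle g_i\rangle$ equals a unit of $S_i$ times $\gamma+\langle g_i\rangle$, i.e. they are associates, completing the argument. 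The step needing the most care is this passage to the residue field, together with the bookkeeping observation that the modulus is $g_i$ (not $f_i$), so that it is the product $\prod_j g_j$ — and not $\prod_j f_j$ — that collapses to zero; everything else is formal.
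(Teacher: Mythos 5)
Your proof is correct and follows essentially the same route as the paper's: both rest on the identity $X^n-r_0\equiv 1-r_0=-\mu\gamma \pmod{g_i}$ (which you obtain by subtracting the two factorizations) together with the fact that the cofactor $\prod_{j\neq i}f_j=\widetilde f_i$ is invertible modulo $g_i$ by Lemma~\ref{1}, so that $f_i+\langle g_i\rangle$ and $\gamma+\langle g_i\rangle$ are associates. The paper merely arranges the same computation in the opposite order, first inverting $\widetilde f_i$ via a B\'ezout identity and then multiplying through by $f_i$.
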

\begin{proof}
It suffices to show that $f_i+\langle g_i\rangle$ and $\gamma+\langle g_i\rangle$ are associates in $R[X]/\langle g_i\rangle$.
By Lemma~\ref{1},  $\widetilde{ f}_i=\frac{X^n-r_0}{f_i}$ is coprime with $g_i$ in $R[X]$, since $\overline{\widetilde{f}_i}$ is coprime with
$\overline{g_i}$ in $\mathbb{F}_q[X]$. Thus there exist elements $u_i, v_{i}$ in $R[X]$ such that
\begin{equation*}
u_i\widetilde{f}_i+v_ig_i=1.
\end{equation*}
Then in $R[X]/\langle g_i\rangle$,
\begin{equation}\label{one}
\big(u_i+\langle g_i\rangle\big)\big(\widetilde{f}_i+\langle g_i\rangle\big)=1+\langle g_i\rangle,
\end{equation}
which means $u_i+\langle g_i\rangle$ is a unit in $R[X]/\langle g_i\rangle$.
Multiplying by $f_i+\langle g_i\rangle$ on both sides of (\ref{one}) gives
$u_i(X^n-r_0)+\langle g_i\rangle=f_i+\langle g_i\rangle$.
Then
\begin{equation*}
(u_i+\langle g_i\rangle)^{-1}(f_i+\langle g_i\rangle)=
X^n-r_0+\langle g_i\rangle =X^n-1+1-r_0+\langle g_i\rangle =1-r_0+\langle g_i\rangle
=-\gamma\mu+\langle g_i\rangle.
\end{equation*}
We are done.
\end{proof}

By the Chinese Remainder Theorem, we have the following $R$-algebra isomorphism:
\begin{equation}\label{CRT1}
\varphi:~\mathcal{R}_n=R[X]/\langle X^n-1\rangle~
\buildrel{\cong}\over{\longrightarrow}~\bigoplus\limits_{i\in I}\frac{R[X]}{\langle g_i\rangle}.
\end{equation}
Making use of (\ref{CRT1}), Dinh and L\'{o}pez-Permouth obtained the structure of cyclic codes of length $n$ over $R$.
It was shown that any ideal in $\mathcal{R}_n$ is a sum of ideals of the form $\big\langle \gamma^j\hat g_i+\langle X^n-1\rangle\big\rangle$
(see \cite[Theorem 3.2]{Dinh04}).
In the light of Lemma~\ref{lemma1}, we have another characterization  of  cyclic codes of length $n$ over $R$
with polynomial generators in terms of   $f_i$, $i \in I$.

\begin{Theorem}\label{Theorem1}
Let the notation be the same as before.  Let $C$ be a cyclic code of length $n$ over $R$.
Then $C\cong\bigoplus\limits_{i\in I}\big\langle \gamma^{k_i}+\langle g_i\rangle\big\rangle$
under the map given by (\ref{CRT1}) if and only if
$C=\big\langle\prod\limits_{i\in I}f_{i}^{k_i}\big\rangle$, where
$\big\langle \gamma^{k_i}+\langle g_i\rangle\big\rangle$ is an ideal of $R[X]/\langle g_i\rangle$ with $0\leq k_i\leq t$.
In this case,  $|C|=q^{\sum\limits_{i\in I}(t-k_i)\deg f_i}$.
Moreover, for any ideal $C$ in $\mathcal{R}_n$, there exists a unique sequence
$(k_i)_{i\in I}$,  with $0\leq k_i\leq t$,  such that $C=\big\langle\prod\limits_{i\in I}f_{i}^{k_i}\big\rangle$.
\end{Theorem}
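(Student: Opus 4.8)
The plan is to use the Chinese Remainder Theorem isomorphism $\varphi$ of (\ref{CRT1}) to transfer the classification of ideals of $\mathcal{R}_n$ into the product ring $\bigoplus_{i\in I}R[X]/\langle g_i\rangle$, where each factor is a finite chain ring with a completely understood ideal lattice, and then to track the specific generator $\prod_{i\in I}f_i^{k_i}$ coordinate by coordinate through $\varphi$. The iff-statement, the uniqueness of $(k_i)_{i\in I}$, and the cardinality formula will all drop out of this one computation together with the lemmas already established.

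First I would record the standard structure of a finite chain ring: if $S$ is a finite chain ring whose maximal ideal has nilpotency index $t$ and uniformizer $\pi$, then its ideals form the chain $S=\langle\pi^0\rangle\supsetneq\langle\pi^1\rangle\supsetneq\cdots\supsetneq\langle\pi^t\rangle=\{0\}$, so there are exactly $t+1$ of them, each principal and generated by a power of $\pi$. By Lemma~\ref{Dinhlem}, each factor $S_i:=R[X]/\langle g_i\rangle$ is a finite chain ring with maximal ideal $\langle\gamma+\langle g_i\rangle\rangle$ of nilpotency index $t$, so its ideals are exactly $\langle\gamma^{k_i}+\langle g_i\rangle\rangle$ with $0\le k_i\le t$. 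Since an ideal of a finite direct sum of rings with identity is precisely a direct sum of ideals of the summands, $\varphi$ induces a bijection between the ideals of $\mathcal{R}_n$ and the families $\bigoplus_{i\in I}\langle\gamma^{k_i}+\langle g_i\rangle\rangle$; and because each $S_i$ has exactly $t+1$ ideals, the sequence $(k_i)_{i\in I}$ attached to a given ideal $C$ is uniquely determined. This already yields the existence and uniqueness assertion, once the generator is identified.

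The heart of the argument is to compute $\varphi\bigl(\langle\prod_{i\in I}f_i^{k_i}\rangle\bigr)$. Fixing $j\in I$, the $j$-th coordinate of the generator is $\prod_{i\in I}f_i^{k_i}+\langle g_j\rangle$. For $i\ne j$ the reductions $\bar f_i=h_i$ and $\bar g_j=h_j$ are distinct monic irreducibles, hence coprime in $\mathbb{F}_q[X]$, so by Lemma~\ref{1} the polynomials $f_i$ and $g_j$ are coprime in $R[X]$, making $f_i+\langle g_j\rangle$ a unit of $S_j$. Thus $\prod_{i\in I}f_i^{k_i}+\langle g_j\rangle$ is a unit multiple of $f_j^{k_j}+\langle g_j\rangle$ and generates the same ideal, which by Lemma~\ref{lemma1} equals $\langle\gamma^{k_j}+\langle g_j\rangle\rangle$. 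Since the principal ideal generated by an element $(a_i)_{i\in I}$ of $\bigoplus_{i\in I}S_i$ is the direct sum $\bigoplus_{i\in I}\langle a_i\rangle$ of the componentwise ideals, I obtain $\varphi\bigl(\langle\prod_{i\in I}f_i^{k_i}\rangle\bigr)=\bigoplus_{i\in I}\langle\gamma^{k_i}+\langle g_i\rangle\rangle$. As $\varphi$ is a bijection on ideals, the equivalence $C=\langle\prod_{i\in I}f_i^{k_i}\rangle\iff\varphi(C)=\bigoplus_{i\in I}\langle\gamma^{k_i}+\langle g_i\rangle\rangle$ follows immediately, completing the classification.

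For the cardinality I would compute $|C|=|\varphi(C)|=\prod_{i\in I}\bigl|\langle\gamma^{k_i}+\langle g_i\rangle\rangle\bigr|$. The residue field of $S_i$ is $S_i/\langle\gamma+\langle g_i\rangle\rangle\cong\mathbb{F}_q[X]/\langle h_i\rangle\cong\mathbb{F}_{q^{\deg f_i}}$, using $\deg f_i=\deg h_i$. For a finite chain ring with residue field of size $Q$ and nilpotency index $t$, every successive quotient of the ideal chain is isomorphic to the residue field, so $\bigl|\langle\pi^k\rangle\bigr|=Q^{t-k}$; applied to $S_i$ this gives $\bigl|\langle\gamma^{k_i}+\langle g_i\rangle\rangle\bigr|=q^{(t-k_i)\deg f_i}$, and multiplying over $i\in I$ yields $|C|=q^{\sum_{i\in I}(t-k_i)\deg f_i}$. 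The one step demanding genuine care is the coordinate computation in the third paragraph, where Lemma~\ref{1} (coprimality, hence units) and Lemma~\ref{lemma1} (the associate relation) must be combined to reconcile the global generator $f_j^{k_j}$ with the local uniformizer power $\gamma^{k_j}$; everything else reduces to routine facts about finite chain rings and the CRT decomposition.
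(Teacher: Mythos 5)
Your proposal is correct and follows essentially the same route as the paper: push the ideal $\langle\prod_{i}f_i^{k_i}\rangle$ through the CRT isomorphism (\ref{CRT1}), use coprimality (Lemma~\ref{1}) to see the off-diagonal factors become units, invoke Lemma~\ref{lemma1} to identify $f_i^{k_i}$ with $\gamma^{k_i}$ locally, and count via the chain-ring ideal filtration. You are somewhat more explicit than the paper on two points it leaves implicit (that every ideal of the direct sum is a direct sum of component ideals, and why $|\langle\gamma^{k}\rangle|=Q^{t-k}$ in a chain ring), but the argument is the same.
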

\begin{proof}
From Lemma~\ref{lemma1},
$$
\bigoplus\limits_{i\in I}\big\langle \gamma^{k_i}+\langle g_i\rangle\big\rangle=
\bigoplus\limits_{i\in I}\big\langle f_{i}^{k_i}+\langle g_i\rangle\big\rangle.
$$
It is readily seen that,  under the map $\varphi$ given by (\ref{CRT1}),
$$
\varphi\Big(\big\langle\prod\limits_{j\in I}f_{j}^{k_j}\big\rangle\Big)=
\bigoplus\limits_{i\in I}\big\langle \prod\limits_{j\in I}f_{j}^{k_j}+\langle g_i\rangle\big\rangle=
\bigoplus\limits_{i\in I}\big\langle f_{i}^{k_i}+\langle g_i\rangle\big\rangle.
$$
The last equality holds because $\prod\limits_{j\in I\setminus\{i\}}f_{j}^{k_j}+\langle g_i\rangle$
is a unit in $R[X]/\langle g_i\rangle$.
We have also shown that,  for any cyclic code $C$ of length $n$ over $R$, there  exists a  sequence
$(k_i)_{i\in I}$ with $0\leq k_i\leq t$ such that $C=\big\langle\prod\limits_{i\in I}f_{i}^{k_i}\big\rangle$.
Uniqueness can be proved as follows: if
$\big\langle\prod\limits_{i\in I}f_{i}^{k_i}\big\rangle=\big\langle\prod\limits_{i\in I}f_{i}^{k_{i}'}\big\rangle$
with $0\leq k_i, k_{i}'\leq t$ for all $i\in I$,
then
$$
\bigoplus\limits_{i\in I}\big\langle \gamma^{k_i}+\langle g_i\rangle\big\rangle=
\varphi\Big(\big\langle\prod\limits_{i\in I}f_{i}^{k_i}\big\rangle\Big)=
\varphi\Big(\big\langle\prod\limits_{i\in I}f_{i}^{k_{i}'}\big\rangle\Big)=
\bigoplus\limits_{i\in I}\big\langle \gamma^{k_{i}'}+\langle g_i\rangle\big\rangle.
$$
This forces $k_i=k_{i}'$ for all $i\in I$.

To  complete the proof, we have
$$
|C|=\prod\limits_{i\in I}|\big\langle \gamma^{k_i}+\langle g_i\rangle\big\rangle|
=\prod\limits_{i\in I}q^{(t-k_i)\deg g_i}=
q^{\sum\limits_{i\in I}(t-k_i)\deg f_i}.
$$

\end{proof}

\begin{Remark}
The pairwise coprime  monic basic irreducible
factors of $X^n-r_0$ in $R[X]$  can be easily derived from the
pairwise coprime monic basic irreducible
factors of $X^n-1$ in $R[X]$.
To see this, observe that $r_0=1+\gamma\mu$ is an element in the Sylow $p$-subgroup of $R^*$,
where $R^*$ stands for the unit group of $R$.
Let $P$ be the Sylow $p$-subgroup of $R^*$.
Since $\gcd(n,p)=1$, then $\theta:~P~\mapsto~P$, defined by $\theta(y)=y^n$,
is  actually  an  automorphism.
Thus, we can find a unique element $\delta$ of $P$
such that $\delta^nr_0=1$.
Therefore,  if we already have the
pairwise coprime monic basic irreducible
factorization of $X^n-1$ in $R[X]$:
$$
X^n-1=\prod\limits_{i\in I}g_i(X),
$$
then  substitute $X$ for $\delta X$ to obtain
the pairwise coprime
monic basic irreducible
factorization of $X^n-r_0$ in $R[X]$:
$$
X^n-r_0=\prod\limits_{i\in I}\Big(\delta^{-\deg g_i}g_i(\delta X)\Big).
$$
\end{Remark}

\section{Dual cyclic codes}

We can also characterize the dual code of $C$ in terms of the polynomials $f_i$, $i\in I$.
Recall that $X^n-1=\prod_{i\in I}h_i$ gives the monic irreducible factorization
of $X^n-1$ in $\mathbb{F}_q[X]$, and that
$X^n-r_0=\prod_{i\in I}f_i$ is the
pairwise coprime monic basic irreducible
factorization in $R[X]$ with $\bar f_i=h_i$ for  subscripts in this range.
Observe that $h_i^*$ is also a monic divisor of $X^n-1$ in $\mathbb{F}_q[X]$.
Thus, for each $i\in I$,  there exists a unique $i'\in I$
such that $h_{i'}=h_i^*$. This implies that $'$ is a bijection
from $I$ onto $I$, which   satisfies $(i')'=i$ for all $i\in I$.

\begin{lem}\label{lemma4}
With respect to the above notation, let $C=\big\langle\prod\limits_{i\in I}(f_{i}^*)^{k_i}\big\rangle$
be a cyclic code of length $n$ over $R$ with  $0\leq k_i\leq t$.  We then have
$C=\big\langle\prod\limits_{i\in I}f_{i'}^{k_i}\big\rangle$.
\end{lem}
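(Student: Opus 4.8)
The plan is to push both ideals through the Chinese Remainder isomorphism $\varphi$ of (\ref{CRT1}) and show they have the same image in $\bigoplus_{j\in I} R[X]/\langle g_j\rangle$; since $\varphi$ is injective, this forces the two ideals to coincide. First I record the target for the right-hand side. Because $'$ is an involution on $I$, reindexing by $j=i'$ gives $\prod_{i\in I} f_{i'}^{k_i}=\prod_{j\in I} f_{j}^{k_{j'}}$, and each exponent $k_{j'}$ still lies in $[0,t]$, so Theorem~\ref{Theorem1} carries the right-hand ideal $\big\langle \prod_{i\in I} f_{i'}^{k_i}\big\rangle$ to $\bigoplus_{j\in I}\big\langle \gamma^{k_{j'}}+\langle g_j\rangle\big\rangle$. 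It therefore suffices to prove that the left-hand ideal $\big\langle \prod_{i\in I}(f_i^*)^{k_i}\big\rangle$ has the same image.

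The computational heart is to identify the factorization of $\prod_{i\in I} f_i^*$. Reducing modulo $\gamma$ and using that the bar map commutes with the reciprocal operation gives $\overline{f_i^*}=(\overline{f_i})^*=h_i^*=h_{i'}$, so each $f_i^*$ is a monic basic irreducible polynomial lifting $h_{i'}$, and the $f_i^*$ are pairwise coprime by Lemma~\ref{1}. Next, using the multiplicativity $(fg)^*=f^*g^*$ of the reciprocal operation on polynomials with unit constant term, together with $X^n-r_0=\prod_{i\in I} f_i$, I would compute $\prod_{i\in I} f_i^*=(X^n-r_0)^*=X^n-r_0^{-1}$. (The constant term $-r_0$ of $X^n-r_0$ is a unit, so each $f_i(0)$ is a unit and every $f_i^*$ is defined and monic.)

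Now fix $j\in I$ and work in $R[X]/\langle g_j\rangle$. Since $\overline{f_i^*}=h_{i'}$ and $\overline{g_j}=h_j$, Lemma~\ref{1} shows that $f_i^*$ is coprime to $g_j$, hence a unit modulo $g_j$, for every $i$ with $i'\neq j$, i.e.\ for every $i\neq j'$. Applying this to the identity $\prod_{i\in I} f_i^*=X^n-r_0^{-1}$ and using $X^n\equiv 1\pmod{g_j}$ (because $g_j\mid X^n-1$), I obtain, for a suitable unit $w'$ of $R[X]/\langle g_j\rangle$, that $w'f_{j'}^*\equiv X^n-r_0^{-1}\equiv 1-r_0^{-1}=r_0^{-1}\mu\gamma\pmod{g_j}$. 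Exactly as in the proof of Lemma~\ref{lemma1}, this exhibits $f_{j'}^*+\langle g_j\rangle$ as an associate of $\gamma+\langle g_j\rangle$, so $\big\langle (f_{j'}^*)^{k_{j'}}+\langle g_j\rangle\big\rangle=\big\langle \gamma^{k_{j'}}+\langle g_j\rangle\big\rangle$. Since the remaining factors of $\prod_{i\in I}(f_i^*)^{k_i}$ are units modulo $g_j$, the $j$-th component of $\varphi\big(\big\langle \prod_{i\in I}(f_i^*)^{k_i}\big\rangle\big)$ is precisely $\big\langle \gamma^{k_{j'}}+\langle g_j\rangle\big\rangle$. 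As $j$ was arbitrary, the left-hand ideal maps to $\bigoplus_{j\in I}\big\langle \gamma^{k_{j'}}+\langle g_j\rangle\big\rangle$, matching the right-hand ideal, and injectivity of $\varphi$ finishes the argument.

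The point requiring the most care, and the main obstacle to a naive proof, is that $f_i^*$ is \emph{not} equal to $f_{i'}$ in $R[X]$: the $f_{i'}$ are the basic irreducible factors of $X^n-r_0$, whereas the $f_i^*$ are those of the different polynomial $X^n-r_0^{-1}$, so no polynomial identity $f_i^*=f_{i'}$ is available (indeed $r_0\neq r_0^{-1}$ in general). The identification of the two generators can only be made \emph{inside} each quotient $R[X]/\langle g_j\rangle$, through the associate relation established above; this is why routing the comparison through $\varphi$ and Theorem~\ref{Theorem1}, rather than manipulating the generators directly in $\mathcal{R}_n$, is essential. One must also track the involution $i\mapsto i'$ carefully in the index bookkeeping, which is where the exponent $k_{j'}$ (rather than $k_j$) enters.
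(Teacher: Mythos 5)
Your proof is correct, but it takes a genuinely different route from the paper's. The paper argues directly in $\mathcal{R}_n$: it writes $f_{i'}=f_i^*+\gamma q_i$ (possible since $\overline{f_i^*}=h_{i'}=\overline{f_{i'}}$), observes the two identities $r_0\prod_{i\in I}f_i^*=\mu\gamma$ and $\prod_{i\in I}f_{i'}=-\mu\gamma$ in $\mathcal{R}_n$, and then substitutes $\gamma$ by the corresponding unit multiple of the full product $\prod_j f_j^*$ (resp.\ $\prod_j f_{j'}$) inside each binomial $f_i^*+\gamma q_i$, so that each generator visibly divides the other and the two ideals coincide by mutual divisibility. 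You instead push both ideals through the Chinese Remainder isomorphism $\varphi$ and match them component by component, showing via the identity $\prod_i f_i^*=(X^n-r_0)^*=X^n-r_0^{-1}\equiv r_0^{-1}\mu\gamma\pmod{g_j}$ that $f_{j'}^*+\langle g_j\rangle$ is an associate of $\gamma+\langle g_j\rangle$ while the other factors are units; this is a faithful replay of the mechanism of Lemma~\ref{lemma1} and Theorem~\ref{Theorem1} applied to the factorization of $X^n-r_0^{-1}$ rather than $X^n-r_0$. Both arguments hinge on the same key fact (the reciprocal of $X^n-r_0$ is $X^n-r_0^{-1}$, whose value at $X^n=1$ is a unit times $\gamma$), but yours buys a cleaner conceptual picture (both ideals have image $\bigoplus_j\langle\gamma^{k_{j'}}+\langle g_j\rangle\rangle$, which also re-proves the dual-code formula's local structure), at the cost of invoking the CRT machinery, whereas the paper's divisibility argument is shorter and self-contained within $\mathcal{R}_n$. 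Your closing remark that no identity $f_i^*=f_{i'}$ holds in $R[X]$ is exactly the right caution, and is precisely why the paper needs the correction term $\gamma q_i$.
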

\begin{proof}
We know that $\overline{f_i^*}=\overline{f_i}^*=h_i^*=h_{i'}=\overline{f_{i'}}$, which implies that
there exists $q_i\in R[X]$  such that $f_{i'}=f_i^*+\gamma q_i$.
Now in $\mathcal{R}_n$,
$$
r_0\prod\limits_{i\in I}f_i^*=r_0(X^{n}-r_0^{-1})=1+\mu \gamma-1=\mu\gamma
$$
and
$$
\prod_{i\in I}f_{i'}=X^n-r_0=1-1-\mu\gamma=-\mu\gamma.
$$
It follows that
\begin{equation*}
\prod\limits_{i\in I}f_{i'}^{k_i}
=\prod\limits_{i\in I}(f_i^*+\gamma q_i)^{k_i}
=\prod\limits_{i\in I}(f_i^*+\mu^{-1}r_0q_i\prod\limits_{j\in I}f_j^*)^{k_i},
\end{equation*}
\begin{equation*}
\prod\limits_{i\in I}(f_i^*)^{k_i}
=\prod\limits_{i\in I}(f_{i'}-\gamma q_i)^{k_i}
=\prod\limits_{i\in I}(f_{i'}+\mu^{-1}q_i\prod\limits_{j\in I}f_{j'})^{k_i}.
\end{equation*}
Clearly $\prod\limits_{i\in I}(f_i^*)^{k_i}$ is a divisor of $\prod\limits_{i\in I}f_{i'}^{k_i}$ and vice versa.
We have obtained the desired result.
\end{proof}

\begin{lem}
Let $C=\big\langle\prod\limits_{i\in I}f_{i}^{k_i}\big\rangle$
be a  cyclic code of length $n$ over $R$, where the polynomials
$f_i$ are the pairwise coprime monic basic irreducible factors of $X^n-r_0$ in $R[X]$
and  $0\leq k_i\leq t$ for each $i\in I$.
Then $C^\perp=\big\langle\prod\limits_{i\in I}f_{i'}^{t-k_i}\big\rangle$
and $|C^\perp|=q^{\sum\limits_{i\in I}k_i\deg f_i}$.
\end{lem}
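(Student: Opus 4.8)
The plan is to derive both assertions from the annihilator description of the Euclidean dual together with the structure theory already in place. First I would dispose of the cardinality. By Theorem~\ref{Theorem1}, $|C|=q^{\sum_{i\in I}(t-k_i)\deg f_i}$; since $R$ is a chain ring with residue field $\mathbb{F}_q$ and nilpotency index $t$, we have $|R|=q^t$, so $|R|^n=q^{tn}$, and because $\sum_{i\in I}\deg f_i=\deg(X^n-r_0)=n$, the identity $|C|\cdot|C^\perp|=|R|^n$ gives $|C^\perp|=q^{tn-\sum_{i\in I}(t-k_i)\deg f_i}=q^{\sum_{i\in I}k_i\deg f_i}$, as claimed. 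This also matches the size that Theorem~\ref{Theorem1} would assign to the proposed generator $\prod_{i\in I}f_{i'}^{t-k_i}$, since the bijection $i\mapsto i'$ preserves degrees ($\deg f_{i'}=\deg h_{i'}=\deg h_i^*=\deg f_i$); that is a consistency check, but the real work is the structural identity.

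For the structural claim I would use the reciprocation automorphism $\rho$ of $\mathcal{R}_n$ induced by $X\mapsto X^{-1}=X^{n-1}$. A direct coefficient computation shows that the constant term of $u(X)\rho(v)(X)$ in $\mathcal{R}_n$ equals the Euclidean inner product $u\cdot v$, and more generally that the $X^m$-coefficient of $u(X)\rho(v)(X)$ is the inner product of $u$ with a cyclic shift of $v$; hence $u(X)\rho(v)(X)=0$ in $\mathcal{R}_n$ precisely when $u$ is orthogonal to every cyclic shift of $v$. Since $C$ is cyclic, it follows that $C^\perp=\{u\,:\,u\,\rho(v)=0\text{ for all }v\in C\}=\mathrm{Ann}\big(\rho(C)\big)$, the annihilator in $\mathcal{R}_n$ of the ideal $\rho(C)$. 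Because $\rho$ is a ring automorphism, $\rho(C)=\big\langle\prod_{i\in I}\rho(f_i)^{k_i}\big\rangle$; and since $f_i^*=f_i(0)^{-1}X^{\deg f_i}f_i(X^{-1})$ with both $f_i(0)$ and $X$ units in $\mathcal{R}_n$ (here $f_i(0)$ is a unit because $\overline{f_i}=h_i$ divides $X^n-1$ and so has nonzero constant term), we get $\langle\rho(f_i)\rangle=\langle f_i^*\rangle$ and therefore $\rho(C)=\big\langle\prod_{i\in I}(f_i^*)^{k_i}\big\rangle$. Lemma~\ref{lemma4} now rewrites this as $\rho(C)=\big\langle\prod_{i\in I}f_{i'}^{k_i}\big\rangle$.

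It then remains to compute the annihilator of a code in the standard form of Theorem~\ref{Theorem1}. Transporting through the isomorphism $\varphi$ of (\ref{CRT1}), any ideal $\big\langle\prod_{i\in I}f_i^{m_i}\big\rangle$ corresponds to $\bigoplus_{i\in I}\langle\gamma^{m_i}+\langle g_i\rangle\rangle$; by Lemma~\ref{Dinhlem} each $R[X]/\langle g_i\rangle$ is a chain ring with maximal ideal $\langle\gamma+\langle g_i\rangle\rangle$ of nilpotency index $t$, and in such a ring the annihilator of $\langle\gamma^{m_i}\rangle$ is exactly $\langle\gamma^{t-m_i}\rangle$. Reassembling through $\varphi$ and Theorem~\ref{Theorem1} yields $\mathrm{Ann}\big(\big\langle\prod_{i\in I}f_i^{m_i}\big\rangle\big)=\big\langle\prod_{i\in I}f_i^{t-m_i}\big\rangle$. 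Applying this to $\rho(C)=\big\langle\prod_{i\in I}f_{i'}^{k_i}\big\rangle$, that is, putting $m_j=k_{j'}$ after the substitution $j=i'$ and then relabeling via the involution $(i')'=i$, produces $C^\perp=\mathrm{Ann}(\rho(C))=\big\langle\prod_{i\in I}f_{i'}^{t-k_i}\big\rangle$.

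The step I expect to require the most care is threading together the two index manipulations: the passage $\rho(f_i)\rightsquigarrow f_i^*\rightsquigarrow f_{i'}$ through Lemma~\ref{lemma4}, and the separate relabeling $j=i'$ needed when taking annihilators in the standard form. Getting the exponents to land as $t-k_i$ attached to $f_{i'}$ (rather than to $f_i$, or with a shifted exponent) hinges precisely on the involution $(i')'=i$ together with the fact that the annihilator inverts each exponent as $m\mapsto t-m$ component-wise; I would verify these bookkeeping steps explicitly rather than rely on the apparent symmetry.
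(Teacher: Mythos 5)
Your argument is correct, but it reaches the identity $C^\perp=\big\langle\prod_{i\in I}f_{i'}^{t-k_i}\big\rangle$ by a genuinely different route. The paper first computes $|C^\perp|=q^{\sum_i k_i\deg f_i}$ from $|C|\cdot|C^\perp|=|R|^n$, observes that the candidate ideal $\big\langle\prod_i f_{i'}^{t-k_i}\big\rangle=\big\langle\prod_i (f_i^*)^{t-k_i}\big\rangle$ has exactly this cardinality, and then establishes only the single inclusion $\big\langle\prod_i(f_i^*)^{t-k_i}\big\rangle\subseteq C^\perp$ by invoking the criterion of \cite[Proposition 2.12]{Dinh04}: it suffices that $\prod_i f_i^{k_i}\cdot\big(\prod_i(f_i^*)^{t-k_i}\big)^*$ vanish in $\mathcal{R}_n$, which it does because this product is a unit times $(X^n-r_0)^t=(-\mu\gamma)^t=0$. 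You instead identify $C^\perp$ with $\mathrm{Ann}(\rho(C))$ via the reciprocation automorphism $X\mapsto X^{-1}$ and then compute the annihilator componentwise through the isomorphism (\ref{CRT1}), using that in each chain ring $R[X]/\langle g_i\rangle$ the annihilator of $\langle\gamma^{m}\rangle$ is $\langle\gamma^{t-m}\rangle$; your index bookkeeping through the involution $i\mapsto i'$ is carried out correctly and lands on the right exponents. What your approach buys is self-containedness: you obtain equality directly without the cardinality comparison (which then becomes a corollary rather than half of the proof) and without the external citation, at the cost of having to justify the standard facts that $u\cdot v$ is the constant coefficient of $u(X)\rho(v)(X)$ and that $\rho(f_i)$ and $f_i^*$ are associates in $\mathcal{R}_n$ (both of which you address adequately, noting that $f_i(0)$ and $X$ are units). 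The paper's route is shorter given the borrowed criterion; yours makes the duality mechanism explicit.
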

\begin{proof}
By Theorem~\ref{Theorem1},
$$
|C^\perp|=\frac{|R|^n}{|C|}=\frac{q^{nt}}{q^{\sum\limits_{i\in I}(t-k_i)\deg f_i}}=q^{\sum\limits_{i\in I}k_i\deg f_i}
=q^{\sum\limits_{i\in I}k_{i}\deg f_{i'}}
=|\big\langle\prod\limits_{i\in I}f_{i'}^{t-k_i}\big\rangle|.
$$
The fourth  equality holds because $\deg f_{i'}=\deg f_i$.
From Lemma~\ref{lemma4} $\big\langle\prod\limits_{i\in I}(f_{i}^*)^{t-k_i}\big\rangle=\big\langle\prod\limits_{i\in I}f_{i'}^{t-k_i}\big\rangle$,
it remains  to prove that $\big\langle\prod\limits_{i\in I}(f_{i}^*)^{t-k_i}\big\rangle\subseteq C^\perp$.
Following \cite[Proposition 2.12]{Dinh04}, it suffices to show that
$\prod\limits_{i\in I}f_{i}^{k_i}\cdot\big(\prod\limits_{i\in I}(f_{i}^*)^{t-k_i}\big)^*=0$ in
$\mathcal{R}_n$. Indeed,
$$
\prod\limits_{i\in I}f_{i}^{k_i}\cdot\big(\prod\limits_{i\in I}(f_{i}^*)^{t-k_i}\big)^*=\delta\prod\limits_{i\in I}f_{i}^{t}
=\delta(X^n-1-\mu\gamma)^t=0,
$$
where $\delta$ is a suitable unit of $\mathcal{R}_n$.
\end{proof}

We now produce a criterion to determine whether or not a given  cyclic code of length $n$ over $R$ is self-dual.

\begin{Theorem}\label{Theorem2}
Let $C=\big\langle\prod\limits_{i\in I}f_{i}^{k_i}\big\rangle$ be a cyclic code of length $n$ over $R$, where
$f_i$ are the  pairwise coprime monic basic irreducible factors of $X^n-r_0$ in $R[X]$ and $0\leq k_i\leq t$.
Then $C$ is self-dual if and only if $k_i+k_{i'}=t$ for all $i\in I$.
\end{Theorem}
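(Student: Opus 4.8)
The plan is to reduce everything to the uniqueness statement of Theorem~\ref{Theorem1}, which asserts that every ideal of $\mathcal{R}_n$ admits exactly one representation of the form $\big\langle\prod_{i\in I}f_i^{m_i}\big\rangle$ with $0\leq m_i\leq t$. Since $C=\big\langle\prod_{i\in I}f_i^{k_i}\big\rangle$ is already presented in this canonical form, the whole task amounts to rewriting $C^\perp$ in the same canonical form and then comparing exponents term by term.

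First I would invoke the preceding lemma, which supplies $C^\perp=\big\langle\prod_{i\in I}f_{i'}^{t-k_i}\big\rangle$. This is not yet in canonical form, because the subscript on $f$ is $i'$ rather than $i$. To remedy this I would exploit that $'$ is a bijection of $I$ satisfying $(i')'=i$: setting $j=i'$ (equivalently $i=j'$) and letting $j$ run over $I$, the product rewrites as $\prod_{j\in I}f_j^{\,t-k_{j'}}$. Hence $C^\perp=\big\langle\prod_{j\in I}f_j^{\,t-k_{j'}}\big\rangle$, now genuinely in canonical form with exponents $m_j=t-k_{j'}$. I would also record that these exponents lie in the admissible range $[0,t]$, since $0\leq k_{j'}\leq t$, so that the uniqueness statement is legitimately applicable.

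Finally, $C$ is self-dual exactly when $C=C^\perp$, that is, when the two canonical representations $\big\langle\prod_{i\in I}f_i^{k_i}\big\rangle$ and $\big\langle\prod_{j\in I}f_j^{\,t-k_{j'}}\big\rangle$ determine the same ideal. By the uniqueness asserted in Theorem~\ref{Theorem1}, this occurs if and only if the exponent sequences agree, i.e. $k_i=t-k_{i'}$ for every $i\in I$, which is precisely $k_i+k_{i'}=t$ for all $i\in I$, the stated criterion.

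The argument is essentially bookkeeping once the dual has been placed in canonical form; the only step demanding genuine care is the reindexing through the involution $'$, together with the verification that the resulting exponents $t-k_{i'}$ remain in $[0,t]$ so that the uniqueness theorem may be applied. I do not anticipate any serious obstacle beyond correctly handling this substitution.
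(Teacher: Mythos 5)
Your proposal is correct and follows essentially the same route as the paper: it takes $C^\perp=\big\langle\prod_{i\in I}f_{i'}^{t-k_i}\big\rangle$ from the preceding lemma, reindexes through the involution $'$ to get $\big\langle\prod_{i\in I}f_{i}^{t-k_{i'}}\big\rangle$, and compares exponents via the uniqueness clause of Theorem~\ref{Theorem1}. Your writeup is merely more explicit than the paper's about why the comparison of exponents is legitimate.
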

\begin{proof}
Recall that $'$ is a bijection
from $I$ onto $I$, which   satisfies $(i')'=i$ for all $i\in I$.
Then
$$
C^\perp=
\big\langle\prod\limits_{i\in I}f_{i'}^{t-k_i}\big\rangle
=\big\langle\prod\limits_{i\in I}f_{(i')'}^{t-k_{i'}}\big\rangle
=\big\langle\prod\limits_{i\in I}f_{i}^{t-k_{i'}}\big\rangle.
$$
Comparing with $C=\big\langle\prod\limits_{i\in I}f_{i}^{k_i}\big\rangle$,
it follows that $C=C^\perp$ if and only if $k_i+k_{i'}=t$ for all $i\in I$.
\end{proof}

From the criterion above,  we are led to a simple condition for
the existence of
self-dual cyclic codes over finite chain rings.

\begin{Theorem}\label{corollary}
Let the notation be the same as before. Then there exists a
self-dual cyclic code of length $n$ over $R$ if and only if $t$, the nilpotency index of $R$, is even.
\end{Theorem}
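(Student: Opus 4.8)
The plan is to reduce the existence question to the combinatorial criterion already established in Theorem~\ref{Theorem2}, which says that a cyclic code $C=\big\langle\prod_{i\in I}f_{i}^{k_i}\big\rangle$ is self-dual precisely when $k_i+k_{i'}=t$ for all $i\in I$. Thus the existence of a self-dual cyclic code of length $n$ over $R$ is equivalent to the existence of a sequence $(k_i)_{i\in I}$ with $0\le k_i\le t$ satisfying this system of equations. I would therefore devote the proof to deciding exactly when such a sequence can be found, splitting $I$ according to the behaviour of the involution $i\mapsto i'$.

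For the ``if'' direction I would assume $t$ is even, say $t=2s$, and exhibit an explicit solution. The key observation is that the bijection $'$ on $I$ is an involution satisfying $(i')'=i$, so its orbits have size one or two. For each fixed point $i=i'$ (corresponding to a self-reciprocal factor $h_i$), the constraint reduces to $2k_i=t$, which is solvable precisely because $t$ is even: take $k_i=s$. For each two-element orbit $\{i,i'\}$ with $i\ne i'$, the single constraint $k_i+k_{i'}=t$ is easily satisfied, for instance by setting $k_i=t$ and $k_{i'}=0$. Assembling these choices over all orbits yields a valid sequence $(k_i)_{i\in I}$ with $0\le k_i\le t$, hence a self-dual cyclic code exists.

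For the ``only if'' direction I would argue contrapositively: suppose $t$ is odd and derive a contradiction with the existence of any self-dual code. The crucial point is that there is always at least one self-reciprocal irreducible factor of $X^n-1$ over $\mathbb{F}_q$, namely $h=X-1$, whose reciprocal is itself; the corresponding index $i_0\in I$ is a fixed point of the involution, i.e.\ $i_0'=i_0$. Theorem~\ref{Theorem2} then forces $2k_{i_0}=t$ for a self-dual code, which is impossible when $t$ is odd since $k_{i_0}$ is an integer. Hence no self-dual cyclic code can exist when $t$ is odd.

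I do not anticipate a serious obstacle, as the heavy structural work has been done in the earlier results; the argument is essentially an existence-versus-parity bookkeeping over the orbits of the involution $'$. The one point that genuinely requires care, and which I would make sure to state cleanly, is the guaranteed presence of a fixed point of the involution: one must verify that $X-1$ is always among the monic irreducible factors of $X^n-1$ (which holds for every positive integer $n$) and that $(X-1)^*=X-1$, so that the parity obstruction in the ``only if'' direction cannot be vacuously avoided. Once this is noted, the equivalence with the parity of $t$ follows immediately from the criterion of Theorem~\ref{Theorem2}.
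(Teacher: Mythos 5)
Your proposal is correct and follows essentially the same route as the paper: both directions reduce to the criterion $k_i+k_{i'}=t$ of Theorem~\ref{Theorem2}, with the ``only if'' direction hinging on the fixed point of the involution coming from the factor $X-1$ (the cyclotomic coset of $0$), forcing $2k_0=t$. The only cosmetic difference is that the paper's ``if'' direction simply exhibits the trivial code $\big\langle\prod_{i\in I}f_i^{t/2}\big\rangle$ rather than assigning $(t,0)$ on the two-element orbits.
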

\begin{proof}
If $t$ is even,  then
$\langle\prod\limits_{i\in I}f_{i}^{\frac{t}{2}}\rangle$ is a self-dual cyclic code
of length $n$ over $R$.

Conversely, assume that there exists a  self-dual cyclic code
$C=\langle\prod\limits_{i\in I}f_{i}^{k_i}\rangle$ of length $n$ over $R$.
From Theorem~\ref{Theorem2}, $k_i+k_{i'}=t$ for all $i\in I$.
In particular,   0 is always an element in $I$ with $0'=0$.
It follows that $2k_0=t$, which gives the desired result.
\end{proof}

Recall that $I$ is a fixed complete set of  representatives of  all $q$-cyclotomic cosets modulo $n.$
Let $\Omega_n$ and $\Delta_n$ be the sets $\Omega_n=\{i\in I\,|\,i'=i\}$ and
$\Delta_n=\{i\in I\,|\,i'\neq i\}=\{i_1,i_1',\cdots,i_s,i_s'\}$ respectively.
Clearly $I$ is the disjoint union of $\Omega_n$ and $\Delta_n$, $I=\Omega_n\,\cup\,\Delta_n$.
It follows that $X^n-r_0=\prod\limits_{i\in \Omega_n}f_i\cdot\prod\limits_{j=1}^sf_{i_j}f_{i_j'}$.
Similar to \cite[Theorem 2]{Jia}
and \cite[Corollary 1]{Jia}, we can characterize all self-dual cyclic codes according to
the sets $\Omega_n$ and $\Delta_n$.
\begin{Corollary}\label{number}
With respect to the above notation, assume that $t$ is even.  We then have that
$C$ is a self-dual cyclic code of length $n$ over $R$ if and only if
$C$ can be expressed as the form
$\langle \prod\limits_{i\in \Omega_n}f_i^{\frac{t}{2}}\cdot \prod\limits_{j=1}^sf_{i_j}^{k_j}f_{i_j'}^{t-k_j}\rangle$,
where $k_j$ are  integers with $0\leq k_j\leq t$.
In particular,  there are exactly $(t+1)^s=(t+1)^{\frac{|\Delta_n|}{2}}$ self-dual cyclic codes of length $n$ over $R$.
\end{Corollary}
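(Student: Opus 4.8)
The plan is to obtain this as a direct consequence of Theorem~\ref{Theorem1} and Theorem~\ref{Theorem2}, treating it as a bookkeeping exercise that solves the self-duality constraint over the partition of the index set $I$ induced by the involution $'$. First I would invoke Theorem~\ref{Theorem1} to write an arbitrary cyclic code uniquely as $C=\langle\prod_{i\in I}f_i^{k_i}\rangle$ with $0\leq k_i\leq t$, and then apply the self-duality criterion of Theorem~\ref{Theorem2}, namely that $C=C^\perp$ holds if and only if $k_i+k_{i'}=t$ for every $i\in I$. The entire statement then reduces to describing the solution set of this system of equations relative to the disjoint decomposition $I=\Omega_n\cup\Delta_n$.

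Next I would analyze the two kinds of indices separately. For $i\in\Omega_n$ we have $i'=i$, so the constraint collapses to $2k_i=t$; since $t$ is assumed even, this forces $k_i=\frac{t}{2}$ and leaves no freedom, which accounts for the factor $\prod_{i\in\Omega_n}f_i^{t/2}$. For the pairs $\{i_j,i_j'\}$ making up $\Delta_n$, the constraint $k_{i_j}+k_{i_j'}=t$ shows that a free choice of $k_j:=k_{i_j}\in\{0,1,\ldots,t\}$ determines its partner exponent as $k_{i_j'}=t-k_j$; because $(i_j')'=i_j$, the equation attached to the partner index is literally the same equation, so no inconsistency can arise. Substituting these values back into $\prod_{i\in I}f_i^{k_i}$ produces precisely the claimed normal form $\langle\prod_{i\in\Omega_n}f_i^{t/2}\cdot\prod_{j=1}^s f_{i_j}^{k_j}f_{i_j'}^{t-k_j}\rangle$.

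Finally, for the enumeration I would argue that the assignment sending a self-dual code to its tuple $(k_1,\ldots,k_s)\in\{0,\ldots,t\}^s$ is a bijection. Surjectivity is immediate, since every such tuple yields a valid self-dual code by the case analysis above; injectivity is where the uniqueness clause of Theorem~\ref{Theorem1} does the real work, guaranteeing that distinct exponent sequences $(k_i)_{i\in I}$ give genuinely distinct ideals. Counting the $(t+1)^s$ admissible tuples, and recording that $|\Delta_n|=2s$ so that $s=\frac{|\Delta_n|}{2}$, yields the stated count $(t+1)^{|\Delta_n|/2}$. The only point demanding genuine care—the main, albeit mild, obstacle—is confirming that $\{i_1,i_1',\ldots,i_s,i_s'\}$ is an honest partition of $\Delta_n$ into $s$ disjoint two-element blocks, so that the $s$ choices are mutually independent; this rests on the bijectivity of $'$ together with the defining property $i'\neq i$ on $\Delta_n$, which prevents a block from degenerating or two blocks from overlapping.
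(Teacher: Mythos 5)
Your proposal is correct and is exactly the argument the paper intends: the Corollary is stated as an immediate consequence of Theorem~\ref{Theorem1} (unique exponent sequence) and Theorem~\ref{Theorem2} (self-duality iff $k_i+k_{i'}=t$), with the case split over $\Omega_n$ and the pairs in $\Delta_n$ and the resulting count $(t+1)^{|\Delta_n|/2}$. The paper omits the proof entirely (citing the analogous results of Jia, Ling and Xing), so your write-up simply supplies the routine details it leaves implicit.
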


When the nilpotency index $t$ is even,
the self-dual cyclic code $\langle\prod_{i\in I}f_{i}^{\frac{t}{2}}\rangle$ is called  {\it trivial self-dual code}.
In order to investigate the existence conditions for nontrivial self-dual cyclic codes,
we need the following observation.

\begin{lem}\label{lemma13}
Let $g\in R[X]$ be a monic basic irreducible factor
of $X^n-1$. Let $h\in \mathbb{F}_q[X]$ be the image of $g$ under the surjective ring homomorphism
$``^-"$ from $R[X]$ onto $\mathbb{F}_q[X]$, namely $\bar g=h$. Then $g$ and $g^*$ are  associates in $R[X]$
if and only if $h$ and $h^*$ are associates in $\mathbb{F}_q[X]$.
\end{lem}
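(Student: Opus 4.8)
The plan is to exploit the fact that, for monic polynomials of equal degree, being associates is the same as being equal, together with the compatibility of reduction with the reciprocal operation. The first preliminary observation I would record is that the bar map commutes with $*$: since $g$ is monic of degree $k=\deg g$ with $g(0)$ a unit, the formula $g(X)^*=g(0)^{-1}X^kg(X^{-1})$ and the fact that $``^-"$ is a ring homomorphism sending the unit $g(0)$ to the unit $h(0)=\bar g(0)$ give $\overline{g^*}=\bar g^*=h^*$. In particular $g^*$ is again monic of degree $k$, and since $h^*$ is irreducible whenever $h$ is (the $*$-operation is multiplicative up to normalization), $g^*$ is again basic irreducible. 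The second preliminary remark is that if $g,g^*$ are associates in $R[X]$, say $\lambda g=g^*$ with $\lambda$ a unit of $R[X]$, then comparing degrees forces $\lambda$ to be a constant unit of $R$ (a unit of $R[X]$ of positive degree would raise the degree of the monic $g$), and comparing leading coefficients of the two monic polynomials gives $\lambda=1$, so $g=g^*$. The identical argument holds in $\mathbb{F}_q[X]$. Thus both sides of the claimed equivalence are genuinely equalities: the statement is $g=g^*$ if and only if $h=h^*$.

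The forward direction is then immediate: if $g=g^*$, applying the bar map and the compatibility above yields $h=\bar g=\overline{g^*}=h^*$, so $h$ and $h^*$ are associates in $\mathbb{F}_q[X]$.

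For the converse I would argue as follows. Assume $h$ and $h^*$ are associates, hence (by the monic/equal-degree remark) $h=h^*$. Since $X^n-1$ is self-reciprocal, $(X^n-1)^*=X^n-1$, the hypothesis $g\mid X^n-1$ gives $g^*\mid X^n-1$ as well. Now both $g$ and $g^*$ are monic basic irreducible divisors of $X^n-1$ whose reductions equal $h$, because $\overline{g^*}=h^*=h=\bar g$. Consider the factorization $X^n-1=\prod_{i\in I}g_i$ into pairwise coprime monic basic irreducibles with $\bar g_i=h_i$; there is a unique index $i_0$ with $h_{i_0}=h$. Writing $X^n-1=g_{i_0}\,\widetilde g_{i_0}$ with $\overline{\widetilde g_{i_0}}=\prod_{i\neq i_0}h_i$, which is coprime to $h$, Lemma~\ref{1} shows $g$ is coprime to $\widetilde g_{i_0}$ in $R[X]$; from $g\mid g_{i_0}\widetilde g_{i_0}$ one deduces $g\mid g_{i_0}$, and equal degrees with both monic force $g=g_{i_0}$. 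The same reasoning applied to $g^*$ gives $g^*=g_{i_0}$, whence $g=g^*$.

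The genuinely substantive step is the converse, and its crux is the uniqueness of the Hensel lift encoded in Lemma~\ref{2}: a priori there could be many monic basic irreducible divisors of $X^n-1$ reducing to the same $h$, and the whole argument hinges on showing there is exactly one, so that $g$ and $g^*$ are forced to coincide. Everything else—the compatibility $\overline{g^*}=h^*$, the reduction of associates to equalities, and the self-reciprocity of $X^n-1$—is routine bookkeeping that I would verify but not belabor.
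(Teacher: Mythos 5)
Your proof is correct, and it reaches the conclusion by a noticeably different route from the paper's. The paper proves the converse by contradiction inside the quotient ring: if $g$ and $g^*$ were not associates they would be distinct (hence coprime) factors in the unique factorization of $X^n-1$ into pairwise coprime monic basic irreducibles, so $g^*+\langle g\rangle$ would be a unit of $R[X]/\langle g\rangle$; pushing this through the reduction map onto $\mathbb{F}_q[X]/\langle h\rangle$ makes $h^*$ simultaneously a unit and zero modulo $h$, a contradiction. You instead first reduce ``associates'' to ``equal'' for monic polynomials (correctly handling the fact that $R[X]$ has non-constant units, whose higher coefficients are nilpotent, via the degree comparison against the monic $g$), and then show directly that $X^n-1$ has exactly one monic basic irreducible divisor reducing to a given $h_{i_0}$, so that $g$ and $g^*$, both reducing to $h=h^*$, must coincide. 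Both arguments run on the same engine --- Lemma~\ref{1}, i.e., coprimality lifts from and descends to the residue field --- but your uniqueness step essentially re-derives the uniqueness clause of Lemma~\ref{2}, which you could simply cite to shorten the converse. What your version buys is the slightly sharper conclusion $g=g^*$ rather than merely ``associates,'' and a direct rather than contradiction-based argument; what the paper's version buys is brevity, since it never needs the monic-associates-implies-equal reduction.
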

\begin{proof}
Obviously, if $g$ and $g^*$ are associates in $R[X]$ then $h$ and $h^*$ are associates in $\mathbb{F}_q[X]$.

Conversely, assume that $h$ and $h^*$ are associates in $\mathbb{F}_q[X]$.
Suppose otherwise that $g$ and $g^*$ are not associates in $R[X]$.
Consider the homomorphism $\rho$ as given in the proof of Lemma~\ref{Dinhlem},
$$
\rho:~~R[X]/\langle g\rangle~\longrightarrow~\mathbb{F}_q[X]/\langle h\rangle,
~~~\sum\limits_{j=0}^{n}a_jX^j+\langle g\rangle\mapsto~\sum\limits_{j=0}^{n}\overline{a_j}X^j+\langle h\rangle,
~~~\hbox{for any $a_0,a_1, \cdots,a_{n}$ in $R$}.
$$
On the one hand,   $g^*$ is coprime with $g$  in $R[X]$.
This implies that $g^*+\langle g\rangle$ is a unit in $R[X]/\langle g\rangle$, and so is
$\rho\big(g^*+\langle g\rangle\big)$ in  $\mathbb{F}_q[X]/\langle h\rangle$.
On the other hand, $\rho\big(g^*+\langle g\rangle\big)=h^*+\langle h\rangle=0$ in $\mathbb{F}_q[X]/\langle h\rangle$.
This is a contradiction.
\end{proof}

\begin{Remark}
It follows from Theorem~\ref{corollary} and Corollary~\ref{number} that  nontrivial self-dual cyclic codes of length $n$ over $R$ exist if and only if
$t$ is even and $|\Delta_n|>0$.
Clearly,  the  condition $|\Delta_n|>0$ holds if and only if there exists a monic irreducible factor $h\in \mathbb{F}_q[X]$ of $X^n-1$ such that
$h$ and $h^*$ are not associates.
Thanks to Lemma~\ref{lemma13},
nontrivial self-dual cyclic codes of length $n$ over $R$ exist
if and only if there exists a monic basic irreducible factor $g\in R[X]$ of $X^n-1$ such that
$g$ and $g^*$ are not associates.
In conclusion,  we have the following result.
\end{Remark}

\begin{Theorem}\label{nontrivial}
Assume that the nilpotency index $t$ is even. The following five statements are equivalent to one another:

{\rm (i)}~Nontrivial self-dual cyclic codes of length $n$ over $R$ exist.

{\rm (ii)}~The cardinality of the set $\Delta_n$ is  nonzero, i.e., $|\Delta_n|>0$.

{\rm (iii)}~$q^{i}\not\equiv-1~(\bmod~n)$ for all positive integer $i$,
where $q$ is the order of the residue field $\mathbb{F}_q=R/\langle \gamma\rangle$.

{\rm (iv)}~There exists a monic irreducible factor $h\in \mathbb{F}_q[X]$ of $X^n-1$ such that
$h$ and $h^*$ are not associates.

{\rm (v)}~There exists a monic basic irreducible factor $g\in R[X]$ of $X^n-1$ such that
$g$ and $g^*$ are not associates.
\end{Theorem}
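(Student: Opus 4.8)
The plan is to leverage the preceding Remark, which already records the chain of equivalences (i) $\Leftrightarrow$ (ii) $\Leftrightarrow$ (iv) $\Leftrightarrow$ (v): indeed, (i) $\Leftrightarrow$ (ii) follows from Theorem~\ref{corollary} together with Corollary~\ref{number} (under the standing assumption that $t$ is even), the equivalence (ii) $\Leftrightarrow$ (iv) is the coset-level reformulation of $|\Delta_n|>0$, and (iv) $\Leftrightarrow$ (v) is exactly Lemma~\ref{lemma13}. Thus the entire burden of the proof is to insert (iii) into this web, and it suffices to establish the single equivalence (ii) $\Leftrightarrow$ (iii).

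To do this, I would first recast $\Omega_n$ and $\Delta_n$ in terms of $q$-cyclotomic cosets. Fixing a primitive $n$-th root of unity $\zeta$ in a suitable extension of $\mathbb{F}_q$, the factor $h_i$ associated with $i\in I$ has root set $\{\zeta^{j}\mid j\in C_i\}$, where $C_i$ denotes the $q$-cyclotomic coset of $i$ modulo $n$. Since the reciprocal $h_i^*$ inverts the (nonzero) roots of $h_i$, its root set is $\{\zeta^{-j}\mid j\in C_i\}$, which is exactly the root set of the factor attached to the coset $C_{-i}=-C_i$; hence $h_i^*=h_{i'}$ with $C_{i'}=-C_i$. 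Consequently $i\in\Omega_n$ (that is, $i'=i$) if and only if the coset $C_i$ is closed under negation, equivalently $-i\equiv i\,q^{k}\pmod n$ for some $k\ge 0$. The statement (ii), namely $|\Delta_n|>0$, therefore says precisely that not every $q$-cyclotomic coset modulo $n$ is closed under negation.

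With this dictionary in place, the equivalence (ii) $\Leftrightarrow$ (iii) becomes a purely number-theoretic statement which I would prove in contrapositive form, $\neg$(ii) $\Leftrightarrow\neg$(iii). For the direction $\neg$(iii) $\Rightarrow\neg$(ii): if $q^{k}\equiv-1\pmod n$ for some $k\ge 1$, then $-j\equiv j\,q^{k}\pmod n$ for every $j$, so every coset is closed under negation and $\Delta_n=\varnothing$. For the converse $\neg$(ii) $\Rightarrow\neg$(iii): if every coset is closed under negation, then in particular the coset $C_1=\{q^{j}\bmod n\mid j\ge 0\}$ contains $-1$, giving $q^{j}\equiv-1\pmod n$ for some $j\ge 0$; when $j\ge 1$ this is already $\neg$(iii), and the residual case $j=0$ forces $n\mid 2$, which I would dispose of directly (for $n=1$ this is automatic, and for $n=2$ one has $q\equiv-1\pmod 2$ since $q$ is odd, so $\neg$(iii) holds in all cases).

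The step I expect to require the most care is the root-theoretic correspondence $h_i^*=h_{i'}$ with $C_{i'}=-C_i$, since it is what silently converts the combinatorial condition ``some coset fails to be closed under negation'' into the arithmetic condition on powers of $q$; once that dictionary is pinned down, the remaining arguments are short. The only genuine nuisance is the bookkeeping for the degenerate lengths $n\in\{1,2\}$, where $q^{0}\equiv-1$ can occur, and these must be checked by hand so that the clean statement ``$q^{i}\equiv-1$ for some $i\ge 1$'' is what ends up being negated in (iii).
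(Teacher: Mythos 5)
Your proposal is correct, and its skeleton coincides with the paper's: the equivalences (i) $\Leftrightarrow$ (ii) $\Leftrightarrow$ (iv) $\Leftrightarrow$ (v) are exactly what the Remark preceding the theorem assembles from Theorem~\ref{corollary}, Corollary~\ref{number} and Lemma~\ref{lemma13}. Where you genuinely diverge is on statement (iii): the paper does not prove the equivalence with (iii) at all, but instead remarks that the equivalence of (i), (iii) and (v) ``appeared previously'' in Dinh and L\'{o}pez-Permouth's work and leaves it there. You supply a self-contained argument, and it is sound: the dictionary $h_i^*=h_{i'}$ with $C_{i'}=-C_i$ is the standard root-inversion computation, the implication $q^k\equiv-1\pmod n\Rightarrow$ every coset is negation-closed is immediate, and for the converse you correctly extract $-1\in C_1$ and then dispose of the degenerate witness $j=0$ by noting it forces $n\mid 2$, where $\neg$(iii) holds anyway (for $n=2$ because $q$ is odd, as $\gcd(n,p)=1$). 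The net effect is that your proof buys a fully self-contained theorem at the cost of one short cyclotomic-coset lemma, whereas the paper's version leans on an external citation for the arithmetic criterion (iii); both are valid, but yours is the more complete argument as written.
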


Note that the equivalence  of {\rm (i)}, {\rm (iii)} and {\rm (v)} appeared  previously in
\cite[Theorem 4.3]{Dinh04} and \cite[Theorem 4.4]{Dinh04}.

\section{Enumeration of  self-dual cyclic codes }
In this section,  we study
an enumeration formula for  self-dual cyclic codes of length $n$ over $R$.
It follows from Corollary~\ref{number} that, if the nilpotency index $t$ is even,  this number is fully determined by $|\Delta_n|$,
the number of reciprocal polynomial pairs in the monic irreducible  factorization of $X^n-1$ over $\mathbb{F}_q$.
Recall that the value $|\Delta_n|+|\Omega_n|$ is exactly equal to  the number of all monic irreducible
factors of $X^n-1$ over $\mathbb{F}_q$,
where $|\Omega_n|$ is the cardinality of all self-reciprocal monic irreducible factors of
$X^n-1$ over $\mathbb{F}_q$.
Meanwhile, one knows that the number of monic irreducible
factors of $X^n-1$ over $\mathbb{F}_q$ can be explicitly given by
$\sum_{d\mid n}\frac{\phi(d)}{{\rm ord}_d(q)}$,
where $\phi$ is   Euler's function.
Thus,  the counting problem for $|\Delta_n|$ naturally reduces to the equivalent question
of determining  the size of $\Omega_n$.

\subsection{An enumeration formula for $|\Omega_{2^m}|$}

We first consider the case when the code length $n$ is a power of 2, $n=2^m$.
The value
$|\Omega_{2^m}|$  can be easily determined.
In fact,  the irreducible factorization of $X^{2^m}-1$ over $\mathbb{F}_q$ has been given explicitly
(e.g.,  see \cite{li} or \cite[Theorem 3.1]{Chen2} for the case $q\equiv1~(\bmod~4)$,
and see \cite[Corollary 4]{Blake} or \cite[Lemma 2.2]{Chen} for the case $q\equiv-1~(\bmod~4)$).
For  convenience,  we reproduce these results below.

\begin{lem}
Assume that $q\equiv1~(\bmod~4)$.
Write $q-1=2^vc$ with $\gcd(2,c)=1$
and $v\geq2$.
Let $\eta$ be a primitive $2^v$th root of unity in $\mathbb{F}_q$.
Then
\begin{equation*}
X^{2^m}-1=\left\{
                   \begin{array}{ll}
                     \prod\limits_{k=0}^{2^v-1}\big(X-\eta^k\big)\cdot
\prod\limits_{j=1}^{m-v}\prod\limits_{i=1 \atop{2\,\nmid \,i}}^{2^v-1}
 \big(X^{2^j}-\eta^i\big), & \hbox{if $m>v$;} \\
                     \prod\limits_{k=0}^{2^m-1}\big(X-\delta^k\big), & \hbox{if $m\leq v$,}
                   \end{array}
                 \right.
\end{equation*}
where $\delta$ is a primitive $2^m$th root of unity in $\mathbb{F}_q$ for
$m\leq v$. All the factors on the right hand side of the equation
above are irreducible over $\mathbb{F}_{q}.$
\end{lem}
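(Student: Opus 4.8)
The plan is to prove the identity by a root-counting argument: both sides are monic of degree $2^m$, so it suffices to show that every factor on the right is irreducible over $\mathbb{F}_q$, that these factors are pairwise distinct monic divisors of $X^{2^m}-1$, and that their degrees sum to $2^m$. Since $q\equiv 1\pmod 4$ forces $q$ odd and hence $p\nmid 2^m$, the polynomial $X^{2^m}-1$ is separable; thus a product of distinct monic irreducible divisors whose degrees add up to the full degree must equal $X^{2^m}-1$. The key preliminary observation is that $q-1=2^v c$ with $c$ odd means the $2$-part of the cyclic group $\mathbb{F}_q^*$ has order exactly $2^v$, so the elements of $\mathbb{F}_q$ of $2$-power multiplicative order are precisely the $2^v$-th roots of unity $\eta^k$, $0\le k\le 2^v-1$. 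The case $m\le v$ is then immediate: here $2^m\mid 2^v\mid q-1$, so all $2^m$-th roots of unity lie in $\mathbb{F}_q$ and $X^{2^m}-1=\prod_{k=0}^{2^m-1}(X-\delta^k)$ splits into distinct linear factors.

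For the main case $m>v$, I would first separate the $\mathbb{F}_q$-rational roots from the rest. By the observation above, the rational $2^m$-th roots of unity are exactly $\eta^0,\dots,\eta^{2^v-1}$, contributing the linear block $\prod_{k=0}^{2^v-1}(X-\eta^k)$. The substance of the lemma is the irreducibility of the binomials $X^{2^j}-\eta^i$ with $i$ odd and $1\le j\le m-v$. Let $\beta$ be a root of such a binomial. Since $i$ is odd, $\eta^i$ is again a primitive $2^v$-th root of unity, and a short order computation shows $\beta$ has multiplicative order exactly $2^{j+v}$; in particular $\beta^{2^m}=1$ because $j+v\le m$, so $X^{2^j}-\eta^i\mid X^{2^m}-1$.

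The degree of the minimal polynomial of $\beta$ over $\mathbb{F}_q$ equals $\mathrm{ord}_{2^{j+v}}(q)$, and I would compute this via the $2$-adic valuation identity $v_2(q^d-1)=v+v_2(d)$, valid for every $d\ge 1$ because $q\equiv 1\pmod 4$ gives $v_2(q+1)=1$. Hence $q^d\equiv 1\pmod{2^{j+v}}$ iff $2^j\mid d$, so the order is $2^j$. Therefore $[\mathbb{F}_q(\beta):\mathbb{F}_q]=2^j=\deg(X^{2^j}-\eta^i)$, and since $\beta$ satisfies this monic polynomial of degree equal to that of its minimal polynomial, the binomial is irreducible. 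As a cross-check one can instead invoke the classical binomial irreducibility criterion over finite fields, whose hypothesis ``$4\mid n\Rightarrow 4\mid q-1$'' encodes the same dependence on $q\equiv 1\pmod 4$.

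It remains to verify distinctness and the degree count. Distinctness follows from the root analysis: the linear factors have roots of $2$-power order at most $2^v$, the roots of $X^{2^j}-\eta^i$ have order exactly $2^{j+v}>2^v$ (separating them from the linear block, and across different $j$ from one another), and for fixed $j$ two distinct odd exponents $i\ne i'$ give $\eta^i\ne\eta^{i'}$ and hence disjoint root sets. For the degrees, there are $2^{v-1}$ odd residues $i$ in $[1,2^v-1]$, so the $j$-th layer contributes degree $2^{v-1}\cdot 2^j$, and $2^v+\sum_{j=1}^{m-v}2^{v-1+j}=2^v+(2^m-2^v)=2^m$ matches $\deg(X^{2^m}-1)$. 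I expect the main obstacle to be the irreducibility of the binomials, and it is exactly there that the hypothesis $q\equiv 1\pmod 4$, i.e.\ $v\ge 2$, is indispensable: it is what makes $v_2(q+1)=1$ and produces the clean formula $v_2(q^d-1)=v+v_2(d)$. When $q\equiv -1\pmod 4$ this formula fails, the binomials factor further, and the decomposition takes the different shape recorded in the companion lemma.
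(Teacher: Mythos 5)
Your argument is correct: the case $m\le v$ is immediate from $2^m\mid q-1$, and for $m>v$ the combination of (a) the order computation showing each root of $X^{2^j}-\eta^i$ has multiplicative order exactly $2^{j+v}$, (b) the valuation identity $v_2(q^d-1)=v+v_2(d)$ (valid precisely because $q\equiv 1\pmod 4$) giving $\mathrm{ord}_{2^{j+v}}(q)=2^j$ and hence irreducibility of the binomials, and (c) the distinctness and degree count $2^v+\sum_{j=1}^{m-v}2^{v-1+j}=2^m$ against the separable polynomial $X^{2^m}-1$, is a complete proof. Note, however, that the paper itself offers no proof of this lemma to compare against: it explicitly reproduces the factorization from the literature (Lidl--Niederreiter and \cite[Theorem 3.1]{Chen2}), so your write-up supplies a self-contained justification where the paper relies on a citation. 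Your route is the standard one for such explicit factorizations of $X^{2^m}-1$, and the only point worth flagging is that the ``classical binomial irreducibility criterion'' you mention as a cross-check is not an independent verification but essentially the same computation packaged differently, so it should not be leaned on as a substitute for step (b).
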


Next is the case $q\equiv-1~(\bmod~4)$.
Note that $4\,|\,(q+1)$ in this case,
hence there is a unique integer $a\ge 2$ such that $2^a\Vert(q+1)$,
where  the notation
$2^a\Vert (q+1)$  means   $2^a\,|\,(q+1)$ but $2^{a+1}\nmid (q+1)$.

\begin{lem}\label{irr-trinomial} Assume that
$q\equiv-1~(\bmod~4)$.
Set $H_1=\{0\}$; recursively define
$$\textstyle H_i=\left\{\pm(\frac{h+1}{2})^\frac{q+1}{4}\,|\,h\in H_{i-1}\right\},$$
for $i=2,3, \cdots, a-1$; and set
$$\textstyle H_a=\left\{\pm(\frac{h-1}{2})^\frac{q+1}{4}\,|\,h\in H_{a-1}\right\}.$$
Then
for $1\leq i\leq a$, $H_i$ has    cardinality $2^{i-1}$.
The
irreducible factorization of $X^{2^m}-1$ over $\mathbb{F}_q$ is given as follows:

If $1\leq m\leq a$, then
\begin{equation}\label{first}
X^{2^m}-1=(X-1)(X+1)\prod\limits_{i=1}^{m-1}\prod\limits_{h\in H_{i}}(X^2-2hX+1);
\end{equation}
if $m\geq a+1$, then
\begin{equation}\label{second}
X^{2^m}-1=(X-1)(X+1)\prod\limits_{h\in H_{i}, ~\atop{1\leq i\leq(a-1)}}(X^2-2hX+1)
\prod\limits_{h\in H_{a},~\atop{0\leq k\leq(m-a-1)}}(X^{2^{k+1}}-2hX^{2^{k}}-1).
\end{equation}
\end{lem}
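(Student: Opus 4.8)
The plan is to group the $2^{m}$-th roots of unity according to their exact order and to describe, for each $j$, the minimal polynomials over $\mathbb{F}_q$ of the primitive $2^{j}$-th roots of unity. Writing $X^{2^{m}}-1=(X-1)(X+1)\prod_{j=2}^{m}\bigl(X^{2^{j-1}}+1\bigr)$, the factor $X^{2^{j-1}}+1$ collects the primitive $2^{j}$-th roots and splits over $\mathbb{F}_q$ into $\phi(2^{j})/\mathrm{ord}_{2^{j}}(q)$ irreducible polynomials, each of degree $\mathrm{ord}_{2^{j}}(q)$. First I would compute these orders from the $2$-adic valuation $v_{2}$ of $q^{n}-1$. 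Since $q\equiv-1~(\bmod~4)$ we have $v_{2}(q-1)=1$ and $v_{2}(q+1)=a$, so the lifting-the-exponent formula for $p=2$ gives $v_{2}(q^{n}-1)=a+v_{2}(n)$ for every even $n$. Taking the least $n$ with $v_{2}(q^{n}-1)\ge j$ yields $\mathrm{ord}_{2^{j}}(q)=1$ for $j\le1$, $\mathrm{ord}_{2^{j}}(q)=2$ for $2\le j\le a+1$, and $\mathrm{ord}_{2^{j}}(q)=2^{\,j-a}$ for $j\ge a+1$. This already determines the number and the degrees of the factors listed in (\ref{first}) and (\ref{second}).

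Next I would pin down the factors themselves through the Frobenius map $x\mapsto x^{q}$. For a primitive $2^{j}$-th root $\zeta$ with $2\le j\le a$ one has $q\equiv-1~(\bmod~2^{j})$, hence $\zeta^{q}=\zeta^{-1}$, and the minimal polynomial of $\zeta$ is $(X-\zeta)(X-\zeta^{-1})=X^{2}-2hX+1$ with $h=\tfrac{1}{2}(\zeta+\zeta^{-1})$. For $j=a+1$ the exact valuation $2^{a}\,\Vert\,(q+1)$ forces $q\equiv 2^{a}-1~(\bmod~2^{a+1})$, so $\zeta^{q}=\zeta^{2^{a}}\zeta^{-1}=-\zeta^{-1}$ because $\zeta^{2^{a}}=-1$; the minimal polynomial is then $(X-\zeta)(X+\zeta^{-1})=X^{2}-2hX-1$ with $h=\tfrac{1}{2}(\zeta-\zeta^{-1})$. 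Accordingly I would identify $H_{i}$, for $1\le i\le a-1$, with the set of values $\tfrac{1}{2}(\zeta+\zeta^{-1})$ as $\zeta$ runs over the primitive $2^{i+1}$-th roots of unity, and $H_{a}$ with the set of values $\tfrac{1}{2}(\zeta-\zeta^{-1})$ as $\zeta$ runs over the primitive $2^{a+1}$-th roots. Since $\zeta$ and $\zeta^{-1}$ (respectively $\zeta$ and $-\zeta^{-1}$) give the same value and no other root does, each of these maps is exactly $2$-to-$1$; hence $|H_{i}|=\phi(2^{i+1})/2=2^{i-1}$ for $1\le i\le a$, which is the asserted cardinality.

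The technical heart is to verify that the explicit recursion reproduces exactly these trace-type values. Put $\xi=\zeta^{2}$ and $h'=\tfrac{1}{2}(\xi+\xi^{-1})\in H_{i-1}$; expanding squares gives $\bigl(\tfrac{1}{2}(\zeta+\zeta^{-1})\bigr)^{2}=\tfrac{1}{2}(h'+1)$, and likewise $\bigl(\tfrac{1}{2}(\zeta-\zeta^{-1})\bigr)^{2}=\tfrac{1}{2}(h'-1)$ at level $a$. I would first check that $\tfrac{1}{2}(h'+1)$ (respectively $\tfrac{1}{2}(h'-1)$) is a \emph{nonzero square} in $\mathbb{F}_q$: it is nonzero because $\zeta\ne\pm1$, and it is a square because it equals the square of $\tfrac{1}{2}(\zeta\pm\zeta^{-1})$, which was just shown to lie in $\mathbb{F}_q$. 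For any nonzero square $y\in\mathbb{F}_q$ one has $y^{(q-1)/2}=1$, whence $\bigl(y^{(q+1)/4}\bigr)^{2}=y^{(q+1)/2}=y$, so $y^{(q+1)/4}$ is a square root of $y$ and $\pm y^{(q+1)/4}$ are its two square roots. Applying this to $y=\tfrac{1}{2}(h'+1)$ (respectively $y=\tfrac{1}{2}(h'-1)$) shows that $\pm\bigl(\tfrac{1}{2}(h'\pm1)\bigr)^{(q+1)/4}$ are precisely the two new values, matching the recursive definitions of $H_{i}$ and of $H_{a}$. I anticipate that this is the main obstacle: keeping the nonzero-squareness together with the transition from the ``$+1$'' recursion to the ``$-1$'' recursion aligned with the change in the Frobenius action at level $a$, a change dictated by the precise valuation $2^{a}\,\Vert\,(q+1)$.

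It remains to handle the higher-degree factors in (\ref{second}) and to assemble the result. For $j=a+1+k$ with $k\ge1$ and $h\in H_{a}$, I would substitute $Y=X^{2^{k}}$ in $X^{2^{k+1}}-2hX^{2^{k}}-1$ to get $Y^{2}-2hY-1=(Y-\omega)(Y+\omega^{-1})$, where $\omega$ is the primitive $2^{a+1}$-th root with $h=\tfrac{1}{2}(\omega-\omega^{-1})$. Every root $X$ then satisfies $X^{2^{k}}\in\{\omega,-\omega^{-1}\}$, a primitive $2^{a+1}$-th root, so $X$ has order $2^{a+1+k}$; thus each such polynomial divides $X^{2^{a+1+k}}-1$ and all of its roots are primitive $2^{a+1+k}$-th roots. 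The $2^{a-1}$ polynomials obtained as $h$ runs over $H_{a}$ are pairwise coprime (two distinct polynomials $Y^{2}-2hY-1$ share no root, since a common root would have to be $0$, which is not a root), and their degrees sum to $2^{a-1}\cdot2^{k+1}=\phi(2^{a+1+k})$. As the primitive $2^{a+1+k}$-th roots have minimal polynomials of degree $2^{k+1}$ by the order computation, each of these polynomials must be a single irreducible factor, so irreducibility follows for free. Finally a degree count---$2+2\sum_{i=1}^{m-1}2^{i-1}=2^{m}$ when $m\le a$, and $2+2\sum_{i=1}^{a-1}2^{i-1}+\sum_{k=0}^{m-a-1}2^{a-1}\cdot2^{k+1}=2^{m}$ when $m\ge a+1$---confirms that nothing has been omitted, completing the proof of (\ref{first}) and (\ref{second}).
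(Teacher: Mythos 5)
Your proof is correct. Note that the paper itself gives no argument for this lemma: it is quoted verbatim from the cited sources (Blake--Gao--Mullin for the case $q\equiv-1\ (\bmod\ 4)$, and Chen--Li--Tuerhong), with the remark ``for convenience, we reproduce these results below.'' So there is no in-paper proof to compare against; your write-up supplies a complete, self-contained derivation along the standard lines of those references: the splitting $X^{2^m}-1=(X-1)(X+1)\prod_j(X^{2^j}+1)$, the order computation $\mathrm{ord}_{2^j}(q)=2$ for $2\le j\le a+1$ and $2^{j-a}$ thereafter via $v_2(q^n-1)=a+v_2(n)$ for even $n$, the identification of the Frobenius action ($\zeta^q=\zeta^{-1}$ up to level $a$, $\zeta^q=-\zeta^{-1}$ at level $a+1$ because $q\equiv 2^a-1\ (\bmod\ 2^{a+1})$), the square-root formula $y^{(q+1)/4}$ matching the recursion $h^2=\tfrac12(h'\pm1)$, and the degree count. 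All the key steps check out, including the irreducibility of $X^{2^{k+1}}-2hX^{2^k}-1$ by comparing its degree with $\mathrm{ord}_{2^{a+1+k}}(q)$. One small imprecision: you justify $\tfrac12(h'+1)\neq0$ ``because $\zeta\neq\pm1$,'' but the correct reason is that $h'+1=0$ forces $\xi=\zeta^2=-1$, i.e.\ $i=1$, which is excluded when $i\ge2$ (and likewise $h'-1=0$ forces $\xi=1$ at level $a$); this is a one-line fix and does not affect the argument.
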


The above two lemmas combine to give the following result.
\begin{Proposition}
The number
of self-reciprocal monic irreducible factors of $X^{2^m}-1$ over $\mathbb{F}_q$
is explicibly given by
\begin{equation}\label{omega1}
|\Omega_{2^m}|=\left\{
                   \begin{array}{lll}
                     1, & \hbox{if $m=0$;} \\
2, & \hbox{if $m=1$ or $m\geq2$ and $4\mid(q-1)$;}\\
2^{\min\{m,a\}-1}+1 & \hbox{if  $m\geq2$ and $4\nmid(q-1)$.}
                   \end{array}
                 \right.
\end{equation}
\end{Proposition}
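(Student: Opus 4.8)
The plan is to extract the self-reciprocal factors directly from the two explicit factorizations recalled above, relying on the elementary fact that a monic factor $p$ of $X^{2^m}-1$ is self-reciprocal exactly when $p=p^*$, and that $p^*$ can be read off coefficient-by-coefficient from the closed forms in the lemmas. The three computations I would record once and reuse are: a linear factor satisfies $(X-\alpha)^*=X-\alpha^{-1}$, so it is self-reciprocal iff $\alpha^2=1$, i.e.\ $\alpha=\pm1$; a palindromic quadratic satisfies $(X^2-2hX+1)^*=X^2-2hX+1$, so it is always self-reciprocal; and $(X^{2^j}-\eta^i)^*=X^{2^j}-\eta^{-i}$. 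The boundary cases $m=0$ and $m=1$ are then immediate, since $X-1$ and $X+1$ are both self-reciprocal, giving $|\Omega_1|=1$ and $|\Omega_2|=2$.

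For $q\equiv1\pmod4$ and $m\ge2$ I would invoke the lemma for this congruence class. When $m\le v$ all factors are linear, and exactly two of the $2^m$-th roots of unity, namely $\pm1$, are self-reciprocal. When $m>v$ the linear part again contributes exactly the two factors $X\mp1$, while a factor $X^{2^j}-\eta^i$ with $i$ odd is self-reciprocal iff $\eta^{2i}=1$, i.e.\ $2^{v-1}\mid i$; since $i$ is odd and $v\ge2$ this never occurs. Hence $|\Omega_{2^m}|=2$ throughout this regime.

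For $q\equiv-1\pmod4$ and $m\ge2$ I would use Lemma~\ref{irr-trinomial} and treat both $1\le m\le a$ and $m\ge a+1$ uniformly. In either case the two linear factors are self-reciprocal, and every quadratic $X^2-2hX+1$ occurring in \eqref{first} or \eqref{second} is self-reciprocal by the palindrome computation; summing the geometric series $\sum_i 2^{i-1}$ over the relevant range $1\le i\le \min\{m,a\}-1$ yields $2^{\min\{m,a\}-1}-1$ such quadratics. When $m\ge a+1$ there are in addition the factors $X^{2^{k+1}}-2hX^{2^k}-1$ with $h\in H_a$; here I would compute $(X^{2^{k+1}}-2hX^{2^k}-1)^*=X^{2^{k+1}}+2hX^{2^k}-1$, which coincides with the original iff $4h=0$, that is $h=0$. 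Assembling the counts then gives $2+(2^{\min\{m,a\}-1}-1)=2^{\min\{m,a\}-1}+1$, as claimed.

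The step I expect to be the crux is showing that the degree-$2^{k+1}$ factors never contribute to $\Omega_{2^m}$, which by the previous paragraph reduces to proving $0\notin H_a$. This is not transparent from the recursion defining the $H_i$, so I would derive it from the cardinality assertion of Lemma~\ref{irr-trinomial}: since $H_a=\{\pm x : x=(\tfrac{h-1}{2})^{(q+1)/4},\, h\in H_{a-1}\}$ is obtained by a $\pm$-doubling of the $2^{a-2}$ values indexed by $H_{a-1}$, and $+0=-0$, the presence of a vanishing generating value would force $|H_a|<2^{a-1}$, contradicting $|H_a|=2^{a-1}$. Therefore $0\notin H_a$, which closes the last case and completes the argument.
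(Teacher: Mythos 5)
Your proposal is correct and follows essentially the same route as the paper, which simply asserts that the two explicit factorization lemmas "combine" to give the formula; you have filled in the case-by-case verification (linear factors self-reciprocal iff the root is $\pm1$, palindromic quadratics always self-reciprocal, $(X^{2^j}-\eta^i)^*=X^{2^j}-\eta^{-i}$, and the trinomial factors never self-reciprocal). The one point the paper leaves entirely implicit, namely that $0\notin H_a$ so the degree-$2^{k+1}$ factors contribute nothing, is handled cleanly by your cardinality argument from $|H_a|=2^{a-1}$.
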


\subsection{A reduction formula for $|\Omega_{2^mn'}|$ }

We turn our attention to the more general case.
Let  $n=2^mn'$ with
$n'=p_1^{r_1}p_2^{r_2}\cdots p_{k}^{r_k}$, where
$p_j$ are distinct odd primes and $r_j$
are positive integers for $1\leq j\leq k$.
Our major goal  is to show that,
the problem for determining  the value of $|\Omega_{2^mn'}|$ can be entirely  reduced to
computing $|\Omega_{n'}|$ and $|\overline\Omega_{n'}|$,
where $|\overline\Omega_{n'}|$ denotes the number of self-reciprocal monic irreducible factors of $X^{n'}-1$ over $\mathbb{F}_{q^2}$.

\begin{lem}\label{pairs}
Let $\ell$ be an odd prime integer coprime with $q$,  and let $s$  be a positive integer. Then
$|\Omega_{\ell^s}|=1$ if and only if
$q$ has odd order in the  multiplicative group of  integers modulo $\ell$, i.e.,
$2\nmid{\rm ord}_{\ell}(q)$.
\end{lem}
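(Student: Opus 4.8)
The plan is to translate the enumeration of self-reciprocal irreducible factors into a group-theoretic condition on the cyclic subgroup $\langle q\rangle$ of the unit groups $(\mathbb{Z}/\ell^{j}\mathbb{Z})^{*}$, and then to reduce everything to the parity of ${\rm ord}_{\ell}(q)$.

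First I would set up the cyclotomic-coset picture. Since $\gcd(q,\ell)=1$, the polynomial $X^{\ell^{s}}-1$ is square free over $\mathbb{F}_q$, and its monic irreducible factors are indexed by the $q$-cyclotomic cosets modulo $\ell^{s}$. Grouping roots by their multiplicative order, the roots of any one factor are all primitive $\ell^{j}$-th roots of unity for a single $j$ with $0\le j\le s$, and such a factor $h$ is self-reciprocal (i.e. $h=h^{*}$) exactly when its set of roots is closed under $\zeta\mapsto\zeta^{-1}$, equivalently when the corresponding exponent coset is closed under negation. The level $j=0$ gives $h=X-1$, which is always self-reciprocal and accounts for the base contribution $1$ to $|\Omega_{\ell^{s}}|$. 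Consequently, $|\Omega_{\ell^{s}}|=1$ holds if and only if for every $j$ with $1\le j\le s$ no factor arising from the primitive $\ell^{j}$-th roots is self-reciprocal.

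Next I would analyze a fixed level $j\ge 1$. The primitive $\ell^{j}$-th roots correspond bijectively to $(\mathbb{Z}/\ell^{j}\mathbb{Z})^{*}$, and under this identification the $q$-cyclotomic cosets are precisely the cosets of $\langle q\rangle$ in $(\mathbb{Z}/\ell^{j}\mathbb{Z})^{*}$. A coset $b\langle q\rangle$ is closed under negation if and only if $-b\langle q\rangle=b\langle q\rangle$, that is $-1\in\langle q\rangle$; since this condition does not involve $b$, either all factors at level $j$ are self-reciprocal or none are. Thus the absence of self-reciprocal $\ell^{j}$-factors is equivalent to $-1\notin\langle q\rangle$ in $(\mathbb{Z}/\ell^{j}\mathbb{Z})^{*}$. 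Because $\ell$ is an odd prime, $(\mathbb{Z}/\ell^{j}\mathbb{Z})^{*}$ is cyclic of even order $\ell^{j-1}(\ell-1)$, so it has a unique element of order $2$, and since $-1$ is such an element it is exactly that involution. In a cyclic group the subgroup $\langle q\rangle$ contains the unique involution precisely when its order ${\rm ord}_{\ell^{j}}(q)$ is even; hence $-1\in\langle q\rangle$ in $(\mathbb{Z}/\ell^{j}\mathbb{Z})^{*}$ if and only if ${\rm ord}_{\ell^{j}}(q)$ is even.

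Finally I would collapse the dependence on $j$. Using the standard fact that ${\rm ord}_{\ell^{j}}(q)={\rm ord}_{\ell}(q)\cdot\ell^{c_{j}}$ for some integer $c_{j}\ge 0$, together with $\ell$ being odd, the order ${\rm ord}_{\ell^{j}}(q)$ has the same parity as ${\rm ord}_{\ell}(q)$ for every $j$. Therefore ${\rm ord}_{\ell^{j}}(q)$ is odd for all $1\le j\le s$ if and only if ${\rm ord}_{\ell}(q)$ is odd, and combining this with the previous steps yields $|\Omega_{\ell^{s}}|=1$ if and only if $2\nmid{\rm ord}_{\ell}(q)$, as desired. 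I expect the main obstacle to be the two structural assertions at level $j$: that negation-closure of a coset is the uniform condition $-1\in\langle q\rangle$, and that this is equivalent to the parity of ${\rm ord}_{\ell^{j}}(q)$ via the unique-involution property of the cyclic group $(\mathbb{Z}/\ell^{j}\mathbb{Z})^{*}$; once these are secured the remaining reductions are routine.
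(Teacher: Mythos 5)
Your proof is correct. It rests on the same two fundamental facts as the paper's argument --- that $(\mathbb{Z}/\ell^{j}\mathbb{Z})^{*}$ is cyclic with $-1$ as its unique involution, and that the parity of ${\rm ord}_{\ell^{j}}(q)$ is independent of $j$ because the kernel of $(\mathbb{Z}/\ell^{j}\mathbb{Z})^{*}\to(\mathbb{Z}/\ell\mathbb{Z})^{*}$ has odd order --- but you organize the argument differently. The paper works with the cyclotomic cosets modulo $\ell^{s}$ directly: the easy direction uses the unique involution of $\mathbb{Z}_{\ell^{s}}^{*}$ to produce $q^{f/2}\equiv-1$, while the harder direction (odd order implies $C_{i}\neq C_{-i}$ for all $i\neq 0$) is handled by an ad hoc congruence computation, writing $i_{0}=\ell^{s_{0}}a_{0}$ and raising $q^{j}a_{0}\equiv -a_{0}\pmod{\ell^{s-s_{0}}}$ to the $f$-th power to force $\ell\mid a_{0}$. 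You instead stratify the nonzero cosets by the order $\ell^{j}$ of the corresponding roots of unity, identify the level-$j$ cosets with cosets of $\langle q\rangle$ in $(\mathbb{Z}/\ell^{j}\mathbb{Z})^{*}$, and observe that negation-closure at level $j$ is the uniform condition $-1\in\langle q\rangle$, equivalently $2\mid{\rm ord}_{\ell^{j}}(q)$. This replaces the paper's congruence manipulation with a single clean group-theoretic criterion applied at every level, and it yields at once the all-or-nothing dichotomy at each level --- hence the fact, recorded separately in Remark~\ref{even}, that \emph{all} irreducible factors of $X^{\ell^{s}}-1$ are self-reciprocal precisely when ${\rm ord}_{\ell}(q)$ is even. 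The paper's computation, on the other hand, stays entirely inside $\mathbb{Z}_{\ell^{s}}$ and needs no discussion of how cosets of non-unit residues reduce to unit groups of smaller moduli. Both are complete; yours is arguably the more transparent of the two.
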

\begin{proof}
Note that ${\rm ord}_{\ell}(q)$ is odd if and only if ${\rm ord}_{\ell^s}(q)$ is odd.
Indeed,  if ${\rm ord}_{\ell}(q)=e$ is odd, then there is an integer $k$ such that
$q^e\equiv1+\ell k~(\bmod~\ell^s)$. Now the desired result follows from the fact that
the natural surjective homomorphism $\pi:~\mathbb{Z}_{\ell^s}^*~\mapsto~\mathbb{Z}_{\ell}^*$
with $|Ker\pi|$ being odd,
and $[1+\ell k]_{\ell^s}\in Ker\pi$.

Let $C_i=\{i\cdot q^j\pmod{\ell^s}\,|\,j=0,1,\ldots\}$ be the
$q$-cyclotomic coset modulo $\ell^s$ containing $i$.
Equivalently, we need to prove that
$C_{i}\neq C_{-i}$ for any integer  $i\not\equiv0\pmod{\ell^s}$ if and only if ${\rm ord}_{\ell}(q)$ is odd.

We first assume that $C_{i}\neq C_{-i}$ for any integer  $i\not\equiv0\pmod{\ell^s}$.
Recall that $\mathbb{Z}_{\ell^s}^*=\{[k]_{\ell^s}\,|\, \gcd(k, \ell)=1 \}$ is a cyclic group, which implies that $[-1]_{\ell^s}$
is the unique element of $\mathbb{Z}_{\ell^s}^*$ with order 2.
If $f={\rm ord}_{\ell^s}(q)$ is even,
then $q^{\frac{f}{2}}\equiv-1\pmod{\ell^s}$.
This is a contradiction, since we would obtain $C_1=C_{-1}$.

Conversely, assume that $f={\rm ord}_{\ell}(q)$ is odd.
Suppose otherwise that there exists an integer $i_0$ with
$i_0\not\equiv0\pmod{\ell^s}$ satisfying $C_{i_0}=C_{-i_0}$.
That is to say, an integer $j$ can be found so that
$q^ji_0\equiv-i_0\pmod{\ell^s}$.
We write $i_0=\ell^{s_0}a_0$ with $\gcd(\ell, a_0)=1$. Clearly
$s>s_0$. We then have $q^ja_0\equiv-a_0\pmod{\ell^{s-s_0}}$.
This leads to $q^{fj}a_0^f\equiv-a_0^f\pmod{\ell^{s-s_0}}$,
which implies that $q^{fj}a_0^f\equiv-a_0^f\pmod{\ell}$.
It follows that $a_0^f\equiv-a_0^f\pmod{\ell}$, and thus $\ell\mid a_0$. This is a contradiction.
\end{proof}

\begin{Remark}\label{even}
From the proof of Lemma~\ref{pairs}, one  can  easily deduce that
all the monic irreducible factors of $X^{\ell^s}-1$  over $\mathbb{F}_q$ are self-reciprocal
if and only if ${\rm ord}_{\ell}(q)$ is even.

At this point, we point out that, for any odd prime $\ell$ coprime with $q$, the value
$|\Omega_{\ell^s}|$ can be determined easily. Indeed, if ${\rm ord}_{\ell}(q)$ is odd, then $|\Omega_{\ell^s}|=1$;
otherwise $|\Omega_{\ell^s}|=\sum_{d=0}^{s}\frac{\phi(\ell^d)}{{\rm ord}_{\ell^d}(q)}$,
the number of all monic irreducible factors of $X^{\ell^s}-1$ over $\mathbb{F}_q$.
\end{Remark}
For computing the value of $|\Omega_n|$,
the following lemma asserts  that the odd prime divisor $\ell$ of $n$ can be ruled out once ${\rm ord}_{\ell}(q)$ is odd.

\begin{lem}\label{reduce}
Let $\ell$ be an odd prime divisor of $n$, so that  $n=\ell^sn_1$ with $\gcd(\ell, n_1)=1$.
Then
$|\Omega_n|=|\Omega_{n_1}|$
if and only if
${\rm ord}_{\ell}(q)$ is odd.
\end{lem}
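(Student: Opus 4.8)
The plan is to express $|\Omega_n|$ as a sum over the divisors of $n$ via a \emph{divisor-level} self-reciprocality criterion, and then to isolate the contribution coming from the prime $\ell$. First I would fix a primitive $n$th root of unity $\zeta$ in an extension of $\mathbb{F}_q$ and recall that the monic irreducible factors of $X^n-1$ correspond to the $q$-cyclotomic cosets $C_i$ modulo $n$. As in the proof of Lemma~\ref{pairs}, the factor attached to $C_i$ is self-reciprocal precisely when $-i\in C_i$, i.e. when $q^j\equiv-1\ (\bmod\ d)$ for some $j$, where $d=n/\gcd(i,n)$ is the order of the root $\zeta^i$. The key observation is that this condition depends only on $d$ — equivalently, on whether $-1$ lies in the cyclic subgroup $\langle q\rangle$ of $\mathbb{Z}_d^*$ — and not on the individual coset. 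Calling a divisor $d\mid n$ self-reciprocal when $-1\in\langle q\rangle\ (\bmod\ d)$, and noting that the primitive $d$th roots split into $\phi(d)/{\rm ord}_d(q)$ cosets which are then either all self-reciprocal or none, I obtain $|\Omega_n|=\sum_{d\mid n,\ d\ \mathrm{s.r.}}\phi(d)/{\rm ord}_d(q)$.

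Next I would carry out the reduction. Every divisor $d\mid n=\ell^s n_1$ factors uniquely as $d=\ell^a d_1$ with $0\le a\le s$ and $d_1\mid n_1$. The terms with $a=0$ reproduce exactly $|\Omega_{n_1}|$, since the self-reciprocality criterion and the coset count $\phi(d_1)/{\rm ord}_{d_1}(q)$ are intrinsic to $d_1$ and $q$ (independent of the ambient modulus). Hence $|\Omega_n|-|\Omega_{n_1}|$ equals the sum of the nonnegative terms $\phi(\ell^a d_1)/{\rm ord}_{\ell^a d_1}(q)$ ranging over those $d=\ell^a d_1$ with $a\ge1$ that are self-reciprocal. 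Consequently $|\Omega_n|=|\Omega_{n_1}|$ if and only if \emph{no} divisor $\ell^a d_1$ with $a\ge1$ is self-reciprocal.

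Finally I would settle the two directions. If ${\rm ord}_\ell(q)$ is odd, then (exactly as shown at the start of the proof of Lemma~\ref{pairs}) ${\rm ord}_{\ell^a}(q)$ is odd for every $a$, so $\langle q\rangle$ has odd order in $\mathbb{Z}_{\ell^a}^*$ and by Lagrange cannot contain the order-$2$ element $-1$ (recall $\ell$ is odd, so $-1\neq1$ there); thus $-1\notin\langle q\rangle\ (\bmod\ \ell^a)$. For any $d=\ell^a d_1$ with $a\ge1$, reducing a hypothetical relation $q^j\equiv-1\ (\bmod\ d)$ modulo $\ell^a$ would contradict this, so no such $d$ is self-reciprocal and therefore $|\Omega_n|=|\Omega_{n_1}|$. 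Conversely, if ${\rm ord}_\ell(q)$ is even, then $\langle q\rangle$ has even order in $\mathbb{Z}_\ell^*$ and so contains the unique order-$2$ element $-1$; hence the divisor $d=\ell$ (the case $a=1$, $d_1=1$) is self-reciprocal and contributes $\phi(\ell)/{\rm ord}_\ell(q)>0$, forcing $|\Omega_n|>|\Omega_{n_1}|$. This yields the stated equivalence.

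I expect the main obstacle to be establishing the divisor-level criterion cleanly and, in the forward direction, the projection argument: the fact that divisibility of the modulus by $\ell$ together with the oddness of ${\rm ord}_\ell(q)$ rules out $-1\in\langle q\rangle$ for \emph{every} modulus $\ell^a d_1$. This hinges on ${\rm ord}_{\ell^a}(q)$ inheriting oddness from ${\rm ord}_\ell(q)$ and on $-1$ being the unique element of order $2$ in $\mathbb{Z}_{\ell^a}^*$; both are available from the groundwork already laid in Lemma~\ref{pairs} and Remark~\ref{even}.
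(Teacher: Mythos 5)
Your proof is correct, but it is organized differently from the paper's. You first establish a divisor-level criterion --- the irreducible factor attached to a coset $C_i$ is self-reciprocal exactly when $-1\in\langle q\rangle$ in $\mathbb{Z}_d^*$ for $d=n/\gcd(i,n)$ --- and from it derive the closed formula $|\Omega_n|=\sum\phi(d)/{\rm ord}_d(q)$ over the ``self-reciprocal'' divisors $d\mid n$; the lemma then follows by checking which divisors $\ell^a d_1$ with $a\ge1$ can satisfy the criterion (none, when ${\rm ord}_\ell(q)$ is odd, by reducing $q^j\equiv-1\ (\bmod\ \ell^ad_1)$ modulo $\ell^a$ and invoking Lagrange; at least $d=\ell$, when ${\rm ord}_\ell(q)$ is even, since a cyclic group of even order in the cyclic group $\mathbb{Z}_\ell^*$ contains the unique involution $-1$). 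The paper instead works directly with polynomial factorizations: for the forward direction it factors $X^{\ell^s}-1=(X-1)\prod_h M_{i_h}M_{-i_h}$ using Lemma~\ref{pairs}, substitutes $X^{n_1}$ for $X$, and observes that $M_{i_h}(X^{n_1})^*=M_{-i_h}(X^{n_1})$ with the two coprime, so none of the new irreducible factors is self-reciprocal; for the converse it notes that $(X^{\ell^s}-1)/(X-1)$ is a divisor of $X^n-1$ coprime to $X^{n_1}-1$, so $|\Omega_n|=|\Omega_{n_1}|$ forces $|\Omega_{\ell^s}|=1$ and Lemma~\ref{pairs} applies. Both arguments ultimately rest on the same number-theoretic fact, but your route buys more: the divisor-sum formula for $|\Omega_n|$ is a general statement that would also streamline the computations in the rest of Section~5, whereas the paper's substitution argument is more ad hoc and proves only the reduction at hand.
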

\begin{proof}
Suppose first that ${\rm ord}_{\ell}(q)$ is odd. By Lemma~\ref{pairs},
we can  assume,  therefore,  that $C_0=\{0\}$, $C_{i_1}, C_{-i_1},\cdots, C_{i_{\rho}}, C_{-i_{\rho}} $,
are all the distinct $q$-cyclotomic cosets modulo $\ell^s$.
Taking a primitive $\ell^s$th root of unity $\eta$ in a suitable extension field of $\mathbb{F}_q$, we get
\begin{equation*}
X^{\ell^s}-1= (X-1)M_{i_1}(X)M_{-i_1}(X)
 \cdots M_{i_{\rho}}(X)M_{-i_{\rho}}(X),
\end{equation*}
with
$$
M_{i_h}(X)=\prod\limits_{j\in C_{i_h}}(X-\eta^j),~~~M_{-i_h}(X)=\prod\limits_{j\in C_{-i_h}}(X-\eta^j),
 \qquad h=1,\cdots,\rho,$$
all being monic irreducible in $\mathbb{F}_q[X]$.
It follows that
\begin{equation*}
X^{n}-1=\big(X^{n_1}\big)^{\ell^s}-1=(X^{n_1}-1)M_{i_1}(X^{n_1})M_{-i_1}(X^{n_1})
 \cdots M_{i_{\rho}}(X^{n_1})M_{-i_{\rho}}(X^{n_1}).
\end{equation*}
Clearly,
$$
M_{i_h}(X^{n_1})^*=M_{-i_h}(X^{n_1}),~~~\hbox{ for all $1\leq h\leq\rho$}.
$$
This implies that the polynomials $M_{i_1}(X^{n_1}),
 \cdots,  M_{-i_{\rho}}(X^{n_1})$ contribute nothing to the value of $|\Omega_n|$.
We are done for this direction.

Conversely, assume that $|\Omega_n|=|\Omega_{n_1}|$. Observe that
$X^{n_1}-1$ and $\frac{X^{\ell^s}-1}{X-1}$ are both divisors of $X^n-1$,
and that $\gcd(X^{n_1}-1, \frac{X^{\ell^s}-1}{X-1})=1$.
This actually means
that $\frac{X^{\ell^s}-1}{X-1}$ contributes nothing to the
value of $|\Omega_n|$.
We get the desired result from Lemma~\ref{pairs} directly.
\end{proof}

For the value $|\Omega_n|$, now we can assume that $n=2^mn'$ with
$n'=p_1^{r_1}p_2^{r_2}\cdots p_{k}^{r_k}$, where
$p_j$ are distinct odd primes and $r_j$
are positive integers such that ${\rm ord}_{p_j}(q)$ are even for
all $1\leq j\leq k$.
The following lemma characterizes  the relationship between the $q$-cyclotomic cosets modulo $n'$ and
the $q^2$-cyclotomic cosets modulo $n'$.
\begin{lem}\label{orbits}
Let $n'=p_1^{r_1}p_2^{r_2}\cdots p_{k}^{r_k}$, where
$p_j$ are distinct odd primes and $r_j$
are positive integers such that ${\rm ord}_{p_j}(q)$ are even for
all $1\leq j\leq k$.
Suppose that there are exactly $s$ distinct $q$-cyclotomic
cosets modulo $n'$. Then the number of
distinct $q^2$-cyclotomic
cosets modulo $n'$ is precisely given by $2s-1$.
\end{lem}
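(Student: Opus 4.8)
The plan is to relate each $q^2$-cyclotomic coset to the $q$-cyclotomic coset containing it, and then to count by the parity of the coset sizes. Since $q^2$ is a power of $q$, every $q^2$-coset $C_i^{(q^2)}$ is contained in the $q$-coset $C_i^{(q)}$, so each $q$-coset is a disjoint union of $q^2$-cosets. First I would determine how many $q^2$-cosets a given $q$-coset contains. Writing $e=|C_i^{(q)}|$, the map $\mathbb{Z}/e\mathbb{Z}\to C_i^{(q)}$, $j\mapsto iq^j$, is a bijection under which multiplication by $q$ becomes the shift $j\mapsto j+1$ and multiplication by $q^2$ becomes $j\mapsto j+2$. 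The orbits of $j\mapsto j+2$ on $\mathbb{Z}/e\mathbb{Z}$ number exactly $\gcd(2,e)$, so a $q$-coset of odd size remains a single $q^2$-coset, while one of even size splits into exactly two.

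Consequently, writing $s_{\rm odd}$ and $s_{\rm even}$ for the numbers of $q$-cosets modulo $n'$ of odd and even size, the total number of $q^2$-cosets equals $s_{\rm odd}+2s_{\rm even}=s+s_{\rm even}$. The target value $2s-1$ is therefore equivalent to the single statement $s_{\rm odd}=1$; that is, the only $q$-cyclotomic coset modulo $n'$ of odd cardinality is $\{0\}$. Since $\{0\}$ certainly has size $1$, it remains only to rule out any other coset of odd size.

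The remaining and principal step is to show that every nonzero coset has even size, and this is where the hypothesis on the ${\rm ord}_{p_j}(q)$ enters. For $i\not\equiv0\pmod{n'}$, set $d=n'/\gcd(i,n')$; then $d\mid n'$, $d>1$, and $|C_i^{(q)}|={\rm ord}_d(q)$. As $d>1$ divides $n'=p_1^{r_1}\cdots p_k^{r_k}$, some prime $p_j$ divides $d$, and reduction modulo $p_j$ furnishes a group homomorphism $(\mathbb{Z}/d\mathbb{Z})^*\to(\mathbb{Z}/p_j\mathbb{Z})^*$ carrying the class of $q$ to the class of $q$, whence ${\rm ord}_{p_j}(q)\mid{\rm ord}_d(q)$. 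Since ${\rm ord}_{p_j}(q)$ is even by hypothesis, so is ${\rm ord}_d(q)$, i.e. $|C_i^{(q)}|$ is even. Thus $\{0\}$ is the unique coset of odd size, $s_{\rm even}=s-1$, and the number of $q^2$-cosets is $s+(s-1)=2s-1$. I expect the splitting dichotomy of the first paragraph — in particular justifying that an even-size $q$-coset breaks into exactly two $q^2$-cosets rather than more — to be the conceptual crux, although it is handled cleanly by the shift-orbit computation; the order-divisibility argument that carries the hypothesis is then routine.
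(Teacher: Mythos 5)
Your proof is correct. It follows the same overall decomposition as the paper's argument --- each $q^2$-cyclotomic coset sits inside a unique $q$-coset, the coset $\{0\}$ stays intact, and every nonzero $q$-coset must be shown to break into exactly two $q^2$-cosets --- but the justification of that crux is genuinely different. The paper proves $D_z\neq D_{zq}$ directly by contradiction: if $z\equiv zq^{2j+1}\pmod{n'}$ then $\xi^z\in\mathbb{F}_{q^{2j+1}}$ for a primitive $n'$th root of unity $\xi$, so $\mathbb{F}_{q^{2j+1}}$ contains a primitive $p_1$th root of unity, forcing the even number ${\rm ord}_{p_1}(q)$ to divide $2j+1$. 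You instead reduce everything to the parity of $|C_i^{(q)}|={\rm ord}_d(q)$ with $d=n'/\gcd(i,n')>1$, obtain evenness from the divisibility ${\rm ord}_{p_j}(q)\mid{\rm ord}_d(q)$ furnished by the reduction homomorphism $(\mathbb{Z}/d\mathbb{Z})^*\to(\mathbb{Z}/p_j\mathbb{Z})^*$, and then invoke the clean general fact that a $q$-coset of size $e$ splits into $\gcd(2,e)$ many $q^2$-cosets. Your route is more elementary (no field extensions or roots of unity) and slightly more informative, since the $\gcd(2,e)$ orbit count makes explicit exactly when a coset splits; the paper's route has the advantage of setting up the explicit pair $D_{i_h}$, $D_{i_hq}$ that is reused in the factorization arguments immediately following the lemma. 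Both are complete and correct.
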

\begin{proof}
Let
$$
C_{z}=\{z\cdot q^j~(\bmod~n')\,|\,j=0,1,\cdots\}
$$
be any nonzero  $q$-cyclotomic coset modulo $n'$. It is clear that
$$
D_{z}=\{z\cdot q^{2j}~(\bmod~n')\,|\,j=0,1,\cdots\}~~~\hbox{and}~~~D_{zq}=\{zq\cdot q^{2j}~(\bmod~n')\,|\,j=0,1,\cdots\}
$$
are  $q^2$-cyclotomic cosets modulo $n'$. Obviously, $C_z=D_{z}\bigcup D_{zq}$.
To complete the proof,  it suffices to show that
$D_z\neq D_{zq}$.
Suppose otherwise that an integer $j$ can be found such that
$z\equiv zq\cdot q^{2j}~(\bmod~n')$.
This implies that  $\xi^{z}$ is an element in $\mathbb{F}_{q^{2j+1}}$,
where $\xi$ is a primitive $n'$th root of unity.
 This is impossible:
Without loss of generality, we can assume that $p_1$ is a prime divisor of ${\rm ord}(\xi^z)$,
i.e., $\mathbb{F}_{q^{2j+1}}$ contains a primitive $p_1$th root of unity.
On the other hand, by our assumption, $f_1={\rm ord}_{p_1}(q)$  is even, which gives $f_1\mid(2j+1)$, a contradiction.
\end{proof}

Let
$$
C_{i_h}=\{i_h\cdot q^j~(\bmod~n')\,|\,j=0,1,\cdots\}, ~~~\hbox{$1\leq h\leq \rho,$}
$$
be all the distinct nonzero $q$-cyclotomic cosets modulo $n'$.
From Lemma~\ref{orbits}, we know that each $q$-cyclotomic coset
$C_{i_s}$ is a disjoint union of two $q^2$-cyclotomic cosets:
$$
D_{i_h}=\{i_h\cdot q^{2j}~(\bmod~n')\,|\,j=0,1,\cdots\}~~~\hbox{and}~~~
D_{i_hq}=\{i_hq\cdot q^{2j}~(\bmod~n')\,|\,j=0,1,\cdots\}.
$$
At this point, we can give the irreducible factorization of $X^{n'}-1$ over $\mathbb{F}_{q^2}$,
where $\mathbb{F}_{q^2}$ is the extension field over $\mathbb{F}_q$  such that $[\mathbb{F}_{q^2}\,:\,\mathbb{F}_q]=2$.
Let $\eta$ be a primitive $n'$th root of unity in some extension field
of $\mathbb{F}_{q^2}$. Then
\begin{equation*}\label{simple-irr-decomposition2}
X^{n'}-1=(X-1) N_{i_1}(X)N_{i_1q}(X)N_{i_2}(X)N_{i_2q}(X)
 \cdots N_{i_\rho}(X)N_{i_\rho q}(X),
\end{equation*}
with
$$
N_{i_h}(X)=\prod\limits_{u\in D_{i_h}}(X-\eta^u),~~N_{i_hq}(X)=\prod\limits_{u\in D_{i_hq}}(X-\eta^u), ~~1\leq h\leq \rho,
$$
all being monic irreducible in $\mathbb{F}_{q^2}[X]$.
Note that
$
X^{n'}-1=(X-1)\prod\limits_{h=1}^\rho M_{i_h}(X)
$
gives the monic irreducible factorization of $X^{n'}-1$ over $\mathbb{F}_q$, where
$M_{i_h}(X)=N_{i_h}(X)N_{i_h q}(X)$, $1\leq h\leq \rho$.

Before giving our results, we  make the following observation.
Assume that
$$
X^{n'}-1=(X-1)M_{i_1}(X)\cdots M_{i_u}(X)M_{j_1}(X)M_{-j_1}(X)\cdots M_{j_v}(X)M_{-j_v}(X),
$$
where $M_{i_k}(X)$ are self-reciprocal monic irreducible factors for $1\leq k\leq u$, while $M_{j_s}(X)$
and $M_{-j_s}(X)$ are reciprocal polynomial pairs for $1\leq s\leq v$.
We can further assume that
$X-1, N_{i_1}, N_{i_1q}, \cdots, N_{i_b}, N_{i_bq},$ are self-reciprocal monic irreducible factors
of $X^{n'}-1$ over $\mathbb{F}_{q^2}$. That is to say,  $2b+1$ is the number of  all self-reciprocal monic irreducible factors of
$X^{n'}-1$ over $\mathbb{F}_{q^2}$, i.e., $|\overline \Omega_{n'}|=2b+1$.
Now the irreducible factorization of $X^{n'}-1$ over $\mathbb{F}_{q^2}$ can be given as follows:
\begin{equation}\label{important}
X^{n'}-1=(X-1)N_{i_1}N_{i_1q}\cdots N_{i_b}N_{i_bq}
N_{i_{(b+1)}}N_{i_{(b+1)}q}\cdots N_{i_{u}}N_{i_{u}q}
N_{j_1}N_{j_1q}\cdots  N_{j_v}N_{j_vq}N_{-j_v}N_{-j_vq}.
\end{equation}
We assert that $N_{-i_j}=N_{i_jq}$ for $(b+1)\leq j\leq u$;
this is because for $(b+1)\leq j\leq u$, $N_{i_j}N_{i_jq}=M_{i_j}=M_{i_j}^*=N_{-i_j}N_{-i_jq}$,
but $N_{i_j}\neq N_{-i_j}$ by assumption.

Let $f(X)$ be a polynomial in $\mathbb{F}_q[X]$  with leading coefficient
$a_n\neq0$. Recall from Notation 3.1 that $\hat f(X)=a_n^{-1}f(X)$ is a monic polynomial over $\mathbb{F}_q$.

\begin{Theorem}\label{reduce2}
With respect to the above notation, we then have
 \begin{equation*}\label{omega}
|\Omega_{n}|=|\Omega_{2^mn'}|=\left\{
                   \begin{array}{ll}
2|\Omega_{n'}|, & \hbox{if $m=1$,  or $m\geq2$ and $4\mid(q-1)$;}\\
2|\Omega_{n'}|+(2^{\min\{m,a\}-1}-1)(2|\Omega_{n'}|-|\overline \Omega_{n'}|), & \hbox{if  $m\geq2$ and $4\nmid(q-1)$.}
                   \end{array}
                 \right.
\end{equation*}
Here, for the case $4\nmid(q-1)$, $a$ is the unique integer such that $2^a\Vert(q+1)$.
\end{Theorem}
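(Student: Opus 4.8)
The plan is to count self-reciprocal monic irreducible factors of $X^{2^m n'}-1$ over $\mathbb{F}_q$ by tracking Galois orbits of roots and exploiting $\gcd(2^m,n')=1$. Every $2^m n'$-th root of unity factors uniquely as $\zeta=\xi\omega$ with $\xi^{2^m}=1$ and $\omega^{n'}=1$, and the Frobenius $\sigma:\zeta\mapsto\zeta^q$ respects this splitting. A factor is self-reciprocal exactly when the $\sigma$-orbit of its root is closed under $\zeta\mapsto\zeta^{-1}$, i.e. under $(\xi,\omega)\mapsto(\xi^{-1},\omega^{-1})$. Writing $O_1$ for the $\sigma$-orbit of $\xi$ (length $\ell_1$) and $O_2$ for that of $\omega$ (length $\ell_2$), I would first prove a purely combinatorial lemma: the set $O_1\times O_2$ decomposes into exactly $\gcd(\ell_1,\ell_2)$ orbits of the diagonal $\sigma$-action, and inversion is self-reciprocal on all of them simultaneously, or on none. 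Concretely, inversion fixes these orbits iff $\xi^{-1}\in O_1$ and $\omega^{-1}\in O_2$, say $\sigma^{s_1}\xi=\xi^{-1}$ and $\sigma^{s_2}\omega=\omega^{-1}$, together with the compatibility $s_1\equiv s_2\pmod{\gcd(\ell_1,\ell_2)}$; the count is then $\gcd(\ell_1,\ell_2)$, and otherwise $0$. The mechanism is that the $\gcd(\ell_1,\ell_2)$ diagonal orbits are indexed by an invariant in $\mathbb{Z}/\gcd(\ell_1,\ell_2)$ that inversion shifts by $s_2-s_1$.

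For the odd part I would show that any nonzero inversion-stable orbit $O_2$ has even length with $s_2=\ell_2/2$: the relation $\sigma^{s_2}\omega=\omega^{-1}$ gives $\ell_2\mid 2s_2$ with $0<s_2<\ell_2$, forcing $s_2=\ell_2/2$. The parity of $s_2$ is precisely the $\mathbb{F}_{q^2}$ dichotomy encoded in Lemma~\ref{orbits} and (\ref{important}): $s_2$ even corresponds to the $b$ self-reciprocal $\mathbb{F}_q$-factors $M_{i_h}$ that split into two self-reciprocal factors $N_{i_h},N_{i_hq}$ over $\mathbb{F}_{q^2}$, while $s_2$ odd corresponds to the $u-b$ factors that split into a reciprocal pair. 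With the paper's notation $u=|\Omega_{n'}|-1$ (the number of nonzero self-reciprocal $\mathbb{F}_q$-factors) and $|\overline\Omega_{n'}|=2b+1$, the two families have sizes $b$ and $u-b$, so the number of $O_2$ with $s_2$ odd is $u-b=\tfrac{1}{2}(2|\Omega_{n'}|-|\overline\Omega_{n'}|-1)$.

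For the $2$-part I would extract from the explicit factorizations — the lemma for $q\equiv1\pmod4$ and Lemma~\ref{irr-trinomial} for $q\equiv-1\pmod4$, as summarized in (\ref{omega1}) — which ``layers'' are inversion-stable, where the layer of $\xi$ records that $\xi$ is a primitive $2^e$-th root of unity. A short $2$-adic computation yields: layers $e=0,1$ are always inversion-stable with $\ell_1=1$, $s_1=0$; if $q\equiv1\pmod4$ (or $m=1$) no layer with $e\ge2$ is inversion-stable, since $-1\notin\langle q\rangle$ in $(\mathbb{Z}/2^e)^\ast$; and if $q\equiv-1\pmod4$ with $m\ge2$, the layers $2\le e\le\min\{m,a\}$ are inversion-stable with $\ell_1=2$ and $s_1=1$, each containing $2^{e-2}$ orbits, while no layer $e>a$ is. Feeding this into the lemma of the first paragraph: the pure $2$-part factors ($\omega=1$) contribute $|\Omega_{2^m}|$ from (\ref{omega1}); layers $e=0,1$ each pair with every nonzero inversion-stable $O_2$ (the congruence is vacuous as $\ell_1=1$), contributing $u$ apiece; and when $q\equiv-1\pmod4$, a layer $e\in\{2,\dots,\min\{m,a\}\}$ pairs with $O_2$ only when $s_1=1\equiv s_2\pmod{\gcd(2,\ell_2)}$, i.e. when $s_2$ is odd, then contributing $\gcd(2,\ell_2)=2$ orbits for each of its $2^{e-2}$ orbits $O_1$ and each of the $u-b$ admissible $O_2$.

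Summing the geometric series $\sum_{e=2}^{\min\{m,a\}}2^{e-2}\cdot2\,(u-b)=(u-b)\bigl(2^{\min\{m,a\}}-2\bigr)$ and adding $|\Omega_{2^m}|+2u$, then substituting $|\Omega_{2^m}|=2^{\min\{m,a\}-1}+1$, $u=|\Omega_{n'}|-1$ and $2b=|\overline\Omega_{n'}|-1$, collapses to $2|\Omega_{n'}|+(2|\Omega_{n'}|-|\overline\Omega_{n'}|)\bigl(2^{\min\{m,a\}-1}-1\bigr)$, while the case $q\equiv1\pmod4$ or $m=1$ kills the second summand and leaves $2|\Omega_{n'}|$. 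The main obstacle I expect is twofold: establishing the combinatorial lemma with the exact compatibility condition $s_1\equiv s_2\pmod{\gcd(\ell_1,\ell_2)}$, and pinning down the $2$-adic data $s_1=1$ on the layers $2\le e\le a$ together with inversion-stability failing for $e>a$. It is precisely the parity matching ``$s_1=1$ meets $s_2$ odd'' that singles out the $u-b$ reciprocal-pair orbits and thereby produces the factor $2|\Omega_{n'}|-|\overline\Omega_{n'}|$.
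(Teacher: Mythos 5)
Your argument is correct and arrives at the stated totals, but it takes a genuinely different route from the paper's. The paper works at the level of explicit polynomial factorizations: it takes the known irreducible factorization of $X^{2^m}-1$ over $\mathbb{F}_q$ (Lemma~\ref{irr-trinomial}), substitutes $X^{n'}$ for $X$, writes each quadratic layer as $(X^{n'}-\beta_h)(X^{n'}-\beta_h^{-1})$, obtains the $\mathbb{F}_{q^2}$-factorization of each $X^{n'}-\beta_h$ by the substitution $X\mapsto\lambda_hX$ in (\ref{important}), and then recombines Galois-conjugate pairs such as $\hat N_{i_k}(\lambda_hX)\hat N_{i_kq}(\lambda_h^{-1}X)$ into $\mathbb{F}_q$-irreducibles, checking self-reciprocality case by case; the case $4\mid(q-1)$ is handled separately by induction on $m$. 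You replace all of this by a single orbit-counting lemma for the diagonal Frobenius action on $O_1\times O_2$ (exactly $\gcd(\ell_1,\ell_2)$ orbits, inversion shifting the indexing invariant by $s_1-s_2$), you classify the inversion-stable $2$-layers directly from whether $-1\in\langle q\rangle$ in $(\mathbb{Z}/2^e\mathbb{Z})^*$, and you detect the $b$ versus $u-b$ dichotomy through the parity of $s_2=\ell_2/2$ rather than through the $\mathbb{F}_{q^2}$-factorization of $X^{n'}-\beta_h$; I checked that this parity criterion does coincide with the splitting dichotomy of Lemma~\ref{orbits}, so the factor $2|\Omega_{n'}|-|\overline\Omega_{n'}|=1+2(u-b)$ enters both proofs by the same mechanism. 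What your version buys is uniformity and transparency: no induction for $4\mid(q-1)$, no explicit sets $H_i$, and the exponent $\min\{m,a\}$ appears simply as the range of layers $e$ with $q\equiv-1\pmod{2^e}$; what the paper's version buys is the explicit list of irreducible factors and reciprocal pairs over $\mathbb{F}_q$, in the same style it then deploys in Subsection 5.3. Your final bookkeeping is consistent: $|\Omega_{2^m}|+2u+2(u-b)\bigl(2^{\min\{m,a\}-1}-1\bigr)$ reduces to the displayed formula after substituting $|\Omega_{2^m}|$ from (\ref{omega1}), $u=|\Omega_{n'}|-1$ and $2b=|\overline\Omega_{n'}|-1$. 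To turn the proposal into a complete proof you would only need to write out the two deferred verifications you yourself flag: the congruence criterion $s_1\equiv s_2\pmod{\gcd(\ell_1,\ell_2)}$ for inversion-stability of the diagonal orbits, and the $2$-adic facts that $-1\notin\langle q\rangle\bmod 2^e$ for $e\geq2$ when $4\mid(q-1)$ and for $e>a$ when $4\nmid(q-1)$; both are routine.
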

\begin{proof}
If $m=1$,
the result follows trivially. Indeed, from $X^n-1=(X^{n'}-1)(X^{n'}+1)$,
we easily get $|\Omega_{n}|=2|\Omega_{n'}|$.
We  prove  by  induction  on $m$ for the case $m\geq2$ and $q\equiv1~(\bmod~4)$.
If $m=2$,
then $X^{4n'}-1=(X^{n'}-1)(X^{n'}+1)(X^{n'}-\alpha)(X^{n'}+\alpha)$,
where $\alpha$ is a primitive fourth root of unity in $\mathbb{F}_q$.
Observe that  $(X^{n'}-\alpha)^*=X^{n'}-\alpha^{-1}=X^{n'}+\alpha$,
which implies that  $X^{n'}-\alpha$ and $X^{n'}+\alpha$ contribute nothing to the
value of $|\Omega_n|$. Hence the required result follows  directly.
For the inductive step, we write
$$
X^n-1=(X^{2^{m-1}n'}-1)(X^{2^{m-2}n'}-\alpha)(X^{2^{m-2}n'}+\alpha).
$$
Similar  reasoning  then  shows  that $X^{2^{m-2}n'}-\alpha$ and $X^{2^{m-2}n'}+\alpha$
contribute nothing to the
value of $|\Omega_n|$.
Thus $|\Omega_{n}|=|\Omega_{2^{m-1}n'}|=2|\Omega_{n'}|$ by induction.

We are left with the case $m\geq2$ and $4\nmid (q-1)$.
We use Lemma~\ref{irr-trinomial} to prove this result.  Assume  first that $2\leq m\leq a$. From (\ref{first}),
$$
X^{2^m}-1=(X-1)(X+1)\prod\limits_{i=1}^{m-1}\prod\limits_{h\in H_{i}}(X^2-2hX+1).
$$
Then
\begin{equation*}
X^{2^mn'}-1=(X^{n'}-1)(X^{n'}+1)\prod\limits_{i=1}^{m-1}\prod\limits_{h\in H_{i}}(X^{2n'}-2hX^{n'}+1).
\end{equation*}
The irreducible factorization of $X^{2n'}-2hX^{n'}+1$ over $\mathbb{F}_{q^2}$,  $1\leq i\leq m-1$ and $h\in H_i$, can be described via the
$q^2$-cyclotomic cosets modulo $n'$, as we will show shortly.
Since $X^{2}-2hX+1$ is an irreducible factor of $X^{2^m}-1$ over $\mathbb{F}_q$,
there exists an element $\beta_h$ in the Sylow 2-subgroup of $\mathbb{F}_{q^2}^*$ such that
$X^{2}-2hX+1=(X-\beta_h)(X-\beta_h^{-1})$.
Note that $\beta_h^q=\beta_h^{-1}$.
We then have
$$
X^{2n'}-2hX^{n'}+1=(X^{n'}-\beta_h)(X^{n'}-\beta_h^{-1}).
$$
On the one hand,  for each element $\beta_h$,
there exists a unique element $\lambda_h$ in the Sylow $2$-subgroup of $\mathbb{F}_{q^2}^*$ such that
$\lambda_h^{n'}\beta_h=1$. We  also note that $\lambda_h^q=\lambda_h^{-1}$ for each $h\in H_i$,
because $\lambda_h^{qn'}=\beta_h^{-q}=\beta_h=\lambda_h^{-n'}$.
On the other hand, (\ref{important}) gives the monic irreducible factorization of $X^{n'}-1$ over $\mathbb{F}_{q^2}$.
We substitute $X$ for $\lambda_hX$ in (\ref{important}) to obtain the  monic irreducible factorization of
$X^{n'}-\beta_h$  over $\mathbb{F}_{q^2}$:
$$
X^{n'}-\beta_h= (X-\lambda_h^{-1})\hat N_{i_1}(\lambda_hX)
 \cdots\hat N_{i_{b}q}(\lambda_hX)\hat N_{i_{(b+1)}}(\lambda_hX)\cdots \hat N_{i_{u}q}(\lambda_hX)
\hat N_{j_1}(\lambda_hX)\cdots \hat N_{-j_{v}q}(\lambda_hX).
$$
Similarly,
$$
X^{n'}-\beta_h^{-1}= (X-\lambda_h)\hat N_{i_1}(\lambda_h^{-1}X)
 \cdots\hat N_{i_{b}q}(\lambda_h^{-1}X)\hat N_{i_{(b+1)}}(\lambda_h^{-1}X)\cdots \hat N_{i_{u}q}(\lambda_h^{-1}X)
\hat N_{j_1}(\lambda_h^{-1}X)\cdots \hat N_{-j_{v}q}(\lambda_h^{-1}X).
$$
Now it is easy to check that the  polynomial $(X-\lambda_h^{-1})(X-\lambda_h)$
is self-reciprocal monic irreducible over $\mathbb{F}_q$.
For $1\leq k\leq b$, we  assert that
$\hat N_{i_k}(\lambda_hX)\hat N_{i_k q}(\lambda_h^{-1}X)$
and
$\hat N_{i_k q}(\lambda_hX)\hat N_{i_k}(\lambda_h^{-1}X)$ are irreducible over $\mathbb{F}_q$
 and form a reciprocal polynomial pair:
\begin{equation}\label{polynomial-pairs}
\Big(\hat N_{i_k}(\lambda_hX)\hat N_{{i_k} q}(\lambda_h^{-1}X)\Big)^*=
\hat N_{{i_k}q}(\lambda_hX)\hat N_{i_k}(\lambda_h^{-1}X).
\end{equation}
Assuming  that $\eta$ is a primitive
$n'$th root of unity in some extension field of $\mathbb{F}_{q^2}$, then
$$
\hat N_{i_k}(\lambda_hX)\hat N_{i_k q}(\lambda_h^{-1}X)=\prod\limits_{j\in D_{i_k}}(X-\lambda_h^{-1}\eta^j)\cdot
\prod\limits_{j\in D_{i_kq}}(X-\lambda_h\eta^j).
$$
For every $j\in D_{i_k}$, $\lambda_h^{-q}\eta^{jq}=\lambda_h\eta^{jq}$ is a root of $\hat N_{i_k q}(\lambda_h^{-1}X)$;
for every $j\in D_{i_kq}$, $\lambda_h^{q}\eta^{jq}=\lambda_h^{-1}\eta^{jq}$ is a root of $\hat N_{i_k}(\lambda_hX)$.
In particular, the roots of $\hat N_{i_k}(\lambda_hX)\hat N_{i_k q}(\lambda_h^{-1}X)$ are invariant under the
action of the Galois group ${\rm Gal}(\mathbb{F}_{q^2}/\mathbb{F}_q)$.
It follows that $\hat N_{i_k}(\lambda_hX)\hat N_{i_k q}(\lambda_h^{-1}X)$ is a polynomial over $\mathbb{F}_q$.
Moreover, $\hat N_{i_k}(\lambda_hX)\hat N_{i_k q}(\lambda_h^{-1}X)$ is irreducible over $\mathbb{F}_q$,
since  $\hat N_{i_k}(\lambda_hX)\not\in\mathbb{F}_q[X]$ and $\hat N_{i_k q}(\lambda_h^{-1}X)\not\in \mathbb{F}_q[X]$.
Similar reasoning shows that $\hat N_{i_k q}(\lambda_hX)\hat N_{i_k}(\lambda_h^{-1}X)$ is irreducible over $\mathbb{F}_q$.
We are left with proving Formula~(\ref{polynomial-pairs}).
Note that
$$
\hat N_{i_k}(\lambda_hX)^*=\Big(\prod\limits_{j\in D_{i_k}}(X-\lambda_h^{-1}\eta^j)\Big)^*=\prod\limits_{j\in D_{-i_k}}(X-\lambda_h\eta^j)
=\prod\limits_{j\in D_{i_k}}(X-\lambda_h\eta^j)=\hat N_{i_k}(\lambda_h^{-1}X).
$$
The third equality holds because $D_{-i_k}=D_{i_k}$ for $1\leq k\leq b$. Similarly, $\hat N_{i_kq}(\lambda_h^{-1}X)^*=\hat N_{i_kq}(\lambda_hX)^*$.
Thus, Formula~(\ref{polynomial-pairs})  has been established.

Using similar arguments,
for $(b+1)\leq k\leq u$, $\hat N_{i_k}(\lambda_hX)\hat N_{{i_k} q}(\lambda_h^{-1}X)$ and
$\hat N_{{i_k} q}(\lambda_hX)\hat N_{i_k }(\lambda_h^{-1}X)$ are self-reciprocal monic irreducible polynomials over $\mathbb{F}_q$:
$$
\Big(\hat N_{i_k}(\lambda_hX)\hat N_{{i_k} q}(\lambda_h^{-1}X)\Big)^*=
\hat N_{i_k}(\lambda_hX)\hat N_{{i_k} q}(\lambda_h^{-1}X),
$$
$$
\Big(\hat N_{{i_k} q}(\lambda_hX)\hat N_{i_k }(\lambda_h^{-1}X)\Big)^*=
\hat N_{{i_k} q}(\lambda_hX)\hat N_{i_k }(\lambda_h^{-1}X).
$$
Finally for $1\leq k\leq v$,
$$
\Big(\hat N_{j_k}(\lambda_hX)\hat N_{{j_k} q}(\lambda_h^{-1}X)\Big)^*=
\hat N_{-j_k}(\lambda_h^{-1}X)\hat N_{-{j_k} q}(\lambda_hX),
$$
$$
\Big(\hat N_{j_{k}q}(\lambda_hX)\hat N_{j_k}(\lambda_h^{-1}X)\Big)^*=
\hat N_{-j_{k}q}(\lambda_h^{-1}X)\hat N_{-j_k}(\lambda_hX).
$$
It is readily seen that, for every $1\leq i\leq m-1$ and $h\in H_i$,
there are exactly  $1+2(u-b)$ self-reciprocal monic irreducible factors of $X^{2n'}-2hX^{n'}+1$
over $\mathbb{F}_q$.
Consequently,
$
\prod_{i=1}^{m-1}\prod_{h\in H_{i}}(X^{2n'}-2hX^{n'}+1)
$
contributes $(2^{m-1}-1)(2|\Omega_{n'}|-|\overline \Omega_{n'}|)$
self-reciprocal monic irreducible factors to the value of $|\Omega_{n}|$,
since $|\Omega_{n'}|=1+u$ and $|\overline\Omega_{n'}|=1+2b$.

We are left to consider the case $m\geq a+1$.
It follows from (\ref{second}) that
$$
X^{2^mn'}-1=(X^{n'}-1)(X^{n'}+1)\prod\limits_{h\in H_{i}, ~\atop{1\leq i\leq(a-1)}}(X^{2n'}-2hX^{n'}+1)
\prod\limits_{h\in H_{a},~\atop{0\leq k\leq(m-a-1)}}(X^{2^{k+1}n'}-2hX^{2^{k}n'}-1).
$$
We just note that the last term
$\prod\limits_{h\in H_{a},~\atop{0\leq k\leq(m-a-1)}}(X^{2^{k+1}n'}-2hX^{2^{k}n'}-1)$,  contributes nothing to the
value of $|\Omega_n|$.
Similar  reasoning yields our desired result.
\end{proof}

As an immediate application of Theorem~\ref{reduce2}, a general formula for the value of $|\Omega_{2^m\ell^s}|$
can be given explicitly, as we show below.
\begin{Corollary}
Let $\ell$ be an odd prime integer coprime with $q$, and let $s$ be a positive integer.
If ${\rm ord}_\ell(q)$ is odd, then
$|\Omega_{2^m\ell^s}|=|\Omega_{2^m}|$, where $|\Omega_{2^m}|$ was explicitly given by (\ref{omega1}).
Otherwise, we  have:

{\rm (i)}~If $m=1$, or $m\geq2$ and $4\mid(q-1)$, then
$|\Omega_{2^m\ell^s}|=2|\Omega_{\ell^s}|=2\sum\limits_{d=0}^s\frac{\phi(\ell^d)}{{\rm ord}_{\ell^d}(q)}$.

{\rm (ii)}~If  $m\geq2$ and $4\nmid(q-1)$, then $|\Omega_{2^m\ell^s}|$ is equal to
$$
2\sum\limits_{d=0}^s\frac{\phi(\ell^d)}{{\rm ord}_{\ell^d}(q)}+(2^{\min\{m,a\}-1}-1)(2\sum\limits_{d=0}^s\frac{\phi(\ell^d)}{{\rm ord}_{\ell^d}(q)}-|\overline \Omega_{\ell^s}|).
$$
Here, if $2\,||\, {\rm ord}_\ell(q)$, then  $|\overline \Omega_{\ell^s}|=1$; otherwise,
$|\overline \Omega_{\ell^s}|=2\sum\limits_{d=0}^s\frac{\phi(\ell^d)}{{\rm ord}_{\ell^d}(q)}-1$.
\end{Corollary}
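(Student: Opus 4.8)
The plan is to read this Corollary as the special case $n'=\ell^s$ of Theorem~\ref{reduce2}, so that the entire task reduces to evaluating the two quantities $|\Omega_{\ell^s}|$ and $|\overline\Omega_{\ell^s}|$ appearing on the right-hand side of that theorem. First I would dispose of the case in which ${\rm ord}_\ell(q)$ is odd. Here Lemma~\ref{reduce}, applied with $n=2^m\ell^s$ and $n_1=2^m$, immediately gives $|\Omega_{2^m\ell^s}|=|\Omega_{2^m}|$, and the closed form for $|\Omega_{2^m}|$ is already recorded in~(\ref{omega1}). This settles the first assertion, and from now on I would assume ${\rm ord}_\ell(q)$ is even, which is precisely the hypothesis under which Theorem~\ref{reduce2} and Lemma~\ref{orbits} apply to $n'=\ell^s$.

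Next I would compute $|\Omega_{\ell^s}|$. Since ${\rm ord}_\ell(q)$ is even, Remark~\ref{even} tells us that every monic irreducible factor of $X^{\ell^s}-1$ over $\mathbb{F}_q$ is self-reciprocal, so $|\Omega_{\ell^s}|$ coincides with the total number of such factors, namely $\sum_{d=0}^s\frac{\phi(\ell^d)}{{\rm ord}_{\ell^d}(q)}$. Substituting this into the first branch of Theorem~\ref{reduce2} (the case $m=1$, or $m\ge2$ and $4\mid(q-1)$) yields part (i) at once.

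The remaining work, and the one genuinely substantive step, is to evaluate $|\overline\Omega_{\ell^s}|$, the number of self-reciprocal monic irreducible factors of $X^{\ell^s}-1$ over $\mathbb{F}_{q^2}$, and then to feed it into the second branch of Theorem~\ref{reduce2}. The key observation is that the entire machinery of Lemma~\ref{pairs} and Remark~\ref{even} applies verbatim with the base field $\mathbb{F}_q$ replaced by $\mathbb{F}_{q^2}$, so the behaviour of $|\overline\Omega_{\ell^s}|$ is governed by the parity of ${\rm ord}_\ell(q^2)$. Writing $e={\rm ord}_\ell(q)$, which is even by assumption, one has ${\rm ord}_\ell(q^2)=e/\gcd(e,2)=e/2$. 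Thus ${\rm ord}_\ell(q^2)$ is odd precisely when $2\,\Vert\,e$, and even precisely when $4\mid e$. In the first subcase, the $\mathbb{F}_{q^2}$-analogue of Lemma~\ref{pairs} gives $|\overline\Omega_{\ell^s}|=1$. In the second subcase, the $\mathbb{F}_{q^2}$-analogue of Remark~\ref{even} forces every $\mathbb{F}_{q^2}$-factor to be self-reciprocal, so $|\overline\Omega_{\ell^s}|$ equals the total number of monic irreducible factors of $X^{\ell^s}-1$ over $\mathbb{F}_{q^2}$; by Lemma~\ref{orbits} (with $n'=\ell^s$) this total is $2s-1$, where $s$ is the number of $q$-cyclotomic cosets modulo $\ell^s$, and since $s=|\Omega_{\ell^s}|=\sum_{d=0}^s\frac{\phi(\ell^d)}{{\rm ord}_{\ell^d}(q)}$ we obtain $|\overline\Omega_{\ell^s}|=2\sum_{d=0}^s\frac{\phi(\ell^d)}{{\rm ord}_{\ell^d}(q)}-1$.

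Finally I would substitute the two values just computed into the $m\ge2$, $4\nmid(q-1)$ branch of Theorem~\ref{reduce2}, recovering the displayed formula in part (ii) together with its case distinction on $|\overline\Omega_{\ell^s}|$. No deep obstacle arises here: the only point demanding genuine care is the order-halving identity ${\rm ord}_\ell(q^2)=e/2$ and the correct transfer of Lemma~\ref{pairs}, Remark~\ref{even} and Lemma~\ref{orbits} from $\mathbb{F}_q$ to its quadratic extension $\mathbb{F}_{q^2}$, after which the result is pure bookkeeping.
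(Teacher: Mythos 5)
Your proposal is correct and follows essentially the same route the paper intends: the paper presents this Corollary as an immediate application of Theorem~\ref{reduce2} with $n'=\ell^s$, using Lemma~\ref{reduce} for the case ${\rm ord}_\ell(q)$ odd, Remark~\ref{even} to identify $|\Omega_{\ell^s}|$ with the total factor count, and the $\mathbb{F}_{q^2}$-versions of Lemma~\ref{pairs}, Remark~\ref{even} and Lemma~\ref{orbits} (via the parity of ${\rm ord}_\ell(q^2)=\tfrac{1}{2}{\rm ord}_\ell(q)$) to evaluate $|\overline\Omega_{\ell^s}|$. The only cosmetic issue is your reuse of the letter $s$ both for the exponent of $\ell$ and for the number of $q$-cyclotomic cosets in the appeal to Lemma~\ref{orbits}; the mathematics is unaffected.
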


\subsection{An enumeration formula for $|\Omega_{\ell_1^{r_1}\ell_2^{r_2}}|$ }

In  this subsection,  we give a general formula for the value of $|\Omega_{\ell_1^{r_1}\ell_2^{r_2}}|$, where
$\ell_1, \ell_2$ are distinct odd primes coprime with $q$,  and $r_1, r_2$ are positive integers.
We set ${\rm ord}_{\ell_i^{r_i}}(q)=2^{a_i}f_i$ with $\gcd(2, f_i)=1$,
$i=1,2$. By Lemma~\ref{reduce}, we can assume that $a_1\geq1$ and $a_2\geq1$.

If $a_1=a_2\geq1$, we claim that
all monic irreducible factors of $X^{\ell_1^{r_1}\ell_2^{r_2}}-1$ over $\mathbb{F}_q$ are self-reciprocal,
and hence $|\Omega_{\ell_1^{r_1}\ell_2^{r_2}}|=
\sum\limits_{d\,\mid\,{\ell_1^{r_1}\ell_2^{r_2}}}\frac{\phi(d)}{{\rm ord}_{d}(q)}$.
To this end, it suffices to prove that there exists some integer $i_0$ such that $q^{i_0}\equiv-1~(\bmod~\ell_1^{r_1}\ell_2^{r_2})$.
Since $q^{2^{a_1-1}f_1}\equiv-1~(\bmod~\ell_1^{r_1})$ and $q^{2^{a_1-1}f_2}\equiv-1~(\bmod~\ell_2^{r_2})$.
it follows that $q^{2^{a_1-1}f_1f_2}\equiv-1~(\bmod~\ell_1^{r_1})$ and $q^{2^{a_1-1}f_1f_2}\equiv-1~(\bmod~\ell_2^{r_2})$.
We then have $q^{2^{a_1-1}f_1f_2}\equiv-1~(\bmod~\ell_1^{r_1}\ell_2^{r_2})$, as claimed.

Thus, without loss of generality, we are left to consider the case $1\leq a_1<a_2$.

To compute $|\Omega_{\ell_1^{r_1}\ell_2^{r_2}}|$, we need to know the relationship between
$q$-cyclotomic cosets modulo $\ell_2^{r_2}$ and
$q^{2^{a_1}f_1}$-cyclotomic cosets modulo $\ell_2^{r_2}$.
In fact, for any nonzero $q$-cyclotomic coset modulo $\ell_2^{r_2}$,
$$
C_{j_k}=\{j_k\cdot q^j\pmod{\ell_2^{r_2}}\,|\,j=0,1,\cdots\},
$$
we assert that $C_{j_k}$ is a disjoint union of $q^{2^{a_1}f_1}$-cyclotomic cosets modulo $\ell_2^{r_2}$:
$$
C_{j_k}=D_{j_k}\bigcup D_{j_kq}\bigcup\cdots\bigcup D_{j_kq^{d_k-1}}, ~~
$$
where $D_{j_k}, D_{j_kq}, \cdots, D_{j_kq^{d_k-1}}$ are
$q^{2^{a_1}f_1}$-cyclotomic cosets modulo $\ell_2^{r_2}$ and
$d_k$ is the smallest positive integer such that $D_{j_kq^{d_k}}=D_{j_k}$.
This can be seen as follows. We can always divide $C_{j_k}$ into unions of
$q^{2^{a_1}f_1}$-cyclotomic cosets (not necessary disjoint):
$$
C_{j_k}=D_{j_k}\bigcup D_{j_kq}\bigcup\cdots\bigcup D_{j_kq^{d_k-1}}\bigcup\cdots\bigcup D_{j_kq^{{2^{a_1}f_1}-1}}.
$$
Note that $D_{j_kq^{{2^{a_1}f_1}}}=D_{j_k}$ as $q^{2^{a_1}f_1}$-cyclotomic cosets
modulo $\ell_2^{r_2}$. Now assume that
$d_k$ is the smallest positive integer such that $D_{j_kq^{d_k}}=D_{j_k}$.
It is clear that every term between $D_{j_kq^{d_k}}$ and
$D_{j_kq^{{2^{a_1}f_1}-1}}$ is exactly equal to one term of
$D_{j_k}, D_{j_kq}, \cdots, D_{j_kq^{d_k-1}}$. Thus, we get the desired decomposition.

In the following,  we first give the irreducible factorization of $X^{\ell_1^{r_1}\ell_2^{r_2}}-1$ over $\mathbb{F}_{q^{2^{a_1}f_1}}$.
Then we recombine the irreducible factors such that each of them is actually  irreducible  over $\mathbb{F}_{q}$.
The following  well-known fact from Galois theory will be used  (e.g., see \cite[Theorem 4.14]{Jacobson}):
Let $\mathbb{E}$ be a finite extension field over $\mathbb{F}_q$. Let $\alpha_1,\alpha_2,\cdots, \alpha_k$ be distinct
elements of $\mathbb{E}$ such that $\{\alpha_1,\alpha_2,\cdots, \alpha_k\}=\{\alpha_1^q,\alpha_2^q,\cdots, \alpha_k^q\}$.
Then $f(X)=(X-\alpha_1)(X-\alpha_2)\cdots(X-\alpha_k)$ is a monic polynomial over $\mathbb{F}_q$; if, in addition,  for any two elements
$\alpha_i, \alpha_j$, there exists an integer $s$ such that $\alpha_i^{q^s}=\alpha_j$,
then $f(X)$ is a monic irreducible polynomial over $\mathbb{F}_q$.

Assume that
$
C_{i_h}=\{i_h\cdot q^j\pmod{\ell_1^{r_1}}\,|\,j=0,1,\cdots\}, ~~\hbox{$1\leq h\leq \rho,$}
$
are all the distinct nonzero $q$-cyclotomic cosets modulo $\ell_1^{r_1}$.
Let
$$
X^{\ell_1^{r_1}}-1= (X-1)M_{i_1}(X)
\cdots M_{i_{\rho}}(X)
$$
be the monic irreducible factorization  of $X^{\ell_1^{r_1}}-1$ over  $\mathbb{F}_q$,  where
$M_{i_h}(X)=\prod\limits_{j\in C_{i_h}}(X-\eta^j)$ and $\eta$
is a primitive $\ell_1^{r_1}$th root of unity in $\mathbb{F}_{q^{2^{a_1}f_1}}$.
It follows that
$$
X^{\ell_1^{r_1}\ell_2^{r_2}}-1= (X^{\ell_2^{r_2}}-1)M_{i_1}(X^{\ell_2^{r_2}})
\cdots M_{i_{\rho}}(X^{\ell_2^{r_2}}).
$$
Note that $M_{i_h}(X^{\ell_2^{r_2}})^*=M_{i_h}(X^{\ell_2^{r_2}})$ for all $1\leq h\leq\rho$, because $M_{i_h}(X)^*=M_{i_h}(X)$
by Remark~\ref{even}.
We need to answer this question:
how many self-reciprocal monic irreducible factors of each $M_{i_h}(X^{\ell_2^{r_2}})$
contribute to $|\Omega_{\ell_1^{r_1}\ell_2^{r_2}}|$ ?
The answer is precisely equal to 1, as we will show shortly.

Now,  assuming that $\deg M_{i_1}(X)=t_1$,  in $\mathbb{F}_{q^{2^{a_1}f_1}}[X]$,
$$
M_{i_1}(X^{\ell_2^{r_2}})=\prod\limits_{s\in C_{i_1}}(X^{\ell_2^{r_2}}-\eta^s)
=(X^{\ell_2^{r_2}}-\eta^{i_1})(X^{\ell_2^{r_2}}-\eta^{i_1q})\cdots(X^{\ell_2^{r_2}}-\eta^{i_1q^{t_1-1}}).
$$
Let $C_{j_k}$, $1\leq k\leq\nu$,  be all the distinct nonzero $q$-cyclotomic cosets modulo $\ell_2^{r_2}$, and
let $N_{j_k}(X)$ be the  monic irreducible factor of $X^{\ell_2^{r_2}}-1$ over $\mathbb{F}_q$ corresponding to $C_{j_k}$.
It follows that $N_{j_k}(X)$ splits into $d_k$ irreducible factors over $\mathbb{F}_{q^{2^{a_1}f_1}}$,
$N_{j_k}(X)=D_{j_k}(X)D_{j_kq}(X)\cdots D_{j_kq^{d_k-1}}(X)$, where $D_{j_k}(X),\cdots, D_{j_kq^{d_k-1}}(X)$ are monic
irreducible factors over $\mathbb{F}_{q^{2^{a_1}f_1}}$ corresponding to the $q^{2^{a_1}f_1}$-cyclotomic cosets $D_{j_k},\cdots, D_{j_kq^{d_k-1}},$
respectively.
We then have the monic irreducible factorization of $X^{\ell_2^{r_2}}-1$ over $\mathbb{F}_{q^{2^{a_1}f_1}}$ as follows:
\begin{equation}\label{factorizationl}
X^{\ell_2^{r_2}}-1=(X-1)\prod\limits_{k=1}^{\nu}D_{j_k}(X)D_{j_kq}(X)\cdots D_{j_kq^{d_k-1}}(X).
\end{equation}
We see that for each $s\in C_{i_1}$,  there exists a unique element $\zeta_s$ in the Sylow $\ell_1$-subgroup of
$\mathbb{F}_{q^{2^{a_1}f_1}}^*$ such that $\zeta_s^{\ell_2^{r_2}}\eta^s=1$.
Indeed, this is because $\eta$ is an element of the Sylow $\ell_1$-subgroup of
$\mathbb{F}_{q^{2^{a_1}f_1}}^*$ and $\gcd(\ell_1,\ell_2)=1$.
Thus, we substitute $X$ for $\zeta_sX$ in (\ref{factorizationl}) to obtain
$$
X^{\ell_2^{r_2}}-\eta^s=(X-\zeta_s^{-1})\prod\limits_{k=1}^{\nu}\hat D_{j_k}(\zeta_sX)\hat D_{j_kq}(\zeta_sX)\cdots \hat D_{j_kq^{d_k-1}}(\zeta_sX),
$$
the monic irreducible factorization of $X^{\ell_2^{r_2}}-\eta^s$ over $\mathbb{F}_{q^{2^{a_1}f_1}}$.
At this point, the monic irreducible factorization of $M_{i_1}(X^{\ell_2^{r_2}})$ over $\mathbb{F}_{q^{2^{a_1}f_1}}$ is given by
\begin{equation*}
\begin{split}
M_{i_1}(X^{\ell_2^{r_2}})
&=\prod\limits_{s\in C_{i_1}}\Big(
(X-\zeta_s^{-1})\prod\limits_{k=1}^{\nu}\hat D_{j_k}(\zeta_sX)\hat D_{j_kq}(\zeta_sX)\cdots \hat D_{j_kq^{d_k-1}}(\zeta_sX)\Big)\\
&=\prod\limits_{s\in C_{i_1}}(X-\zeta_s^{-1})\cdot
\prod\limits_{s\in C_{i_1}}\Big(\prod\limits_{k=1}^{\nu}\hat D_{j_k}(\zeta_sX)\hat D_{j_kq}(\zeta_sX)\cdots \hat D_{j_kq^{d_k-1}}(\zeta_sX)\Big).
\end{split}
\end{equation*}
For any integer $k$, since $\zeta_{i_1}^{\ell_2^{r_2}}\eta^{i_1}=1$ and  $\zeta_{i_1q^k}^{\ell_2^{r_2}}\eta^{i_1q^k}=1$, so
$(\zeta_{i_1}^{q^k})^{\ell_2^{r_2}}=\eta^{-q^ki_1}=\zeta_{i_1q^k}^{\ell_2^{r_2}}$, which implies
$\zeta_{i_1}^{q^k}=\zeta_{i_1 q^k}$.
It follows that
$$
\prod\limits_{s\in C_{i_1}}(X-\zeta_s^{-1})=(X-\zeta_{i_1}^{-1})(X-\zeta_{i_1q}^{-1})\cdots(X-\zeta_{i_1q^{t_1-1}}^{-1})
$$
is a  monic  irreducible polynomial over $\mathbb{F}_q$. Moreover, $\prod\limits_{s\in C_{i_1}}(X-\zeta_s^{-1})$ is self-reciprocal,
since it is a divisor of $X^{\ell_1^{r_1}}-1$ over $\mathbb{F}_q$.
Now for any positive integer $k$, $1\leq k\leq \nu$, we analyze the polynomial
$
F_k(X)=\prod\limits_{s\in C_{i_1}}\big(\hat D_{j_k}(\zeta_sX)\hat D_{j_kq}(\zeta_sX)\cdots \hat D_{j_kq^{d_k-1}}(\zeta_sX)\big).
$
The polynomial $F_k(X)$ can be rewritten as follows:
$$
F_k(X)
=\prod\limits_{r=0}^{t_1-1}\big(\hat D_{j_k}(\zeta_{i_1q^{t_1-r}}X)\hat D_{j_kq}(\zeta_{i_1q^{t_1-r+1}}X)\cdots \hat D_{j_kq^{d_k-1}}(\zeta_{i_1q^{t_1-r+t_1-1}}X)\big).
$$
Here,  $\zeta_{i_1q^{t_1}}=\zeta_{i_1}$,
namely the exponents of $q$ are calculated modulo $t_1$.
Recall that $D_{j_kq^{d_k}}=D_{j_k}$. We deduce that, for any $0\leq r\leq t_1-1$,
$$
G_r(X)=\hat D_{j_k}(\zeta_{i_1q^{t_1-r}}X)\hat D_{j_kq}(\zeta_{i_1q^{t_1-r+1}}X)\cdots \hat D_{j_kq^{d_k-1}}(\zeta_{i_1q^{t_1-r+t_1-1}}X)
$$
is irreducible over $\mathbb{F}_q$.
Since by assumption, $1\leq a_1<a_2$, it follows  that  ${\rm ord}_{\ell_2^{r_2}}(q^{2^{a_1}f_1})$ is even.  By Remark~\ref{even},
$D_{j_kq^i}(X)^*=D_{j_kq^i}(X)$ for every $0\leq i\leq d_k-1$.  We then have
$$
G_r(X)^*
=\hat D_{j_k}(\zeta_{i_1q^{t_1-r}}^{-1}X)\hat D_{j_kq}(\zeta_{i_1q^{t_1-r+1}}^{-1}X)\cdots \hat D_{j_kq^{d_k-1}}(\zeta_{i_1q^{t_1-r+t_1-1}}^{-1}X).
$$
We claim  that $G_r(X)^*\neq G_r(X)$ for all $0\leq r\leq t_1-1$. Because
$\eta^{i_1q^{t_1-r}}$ is neither 1 nor -1, we see that $\zeta_{i_1q^{t_1-r}}$ is neither 1 nor -1.
This implies that $\hat D_{j_k}(\zeta_{i_1q^{t_1-r}}X)\neq\hat D_{j_k}(\zeta_{i_1q^{t_1-r}}^{-1}X)$.
Suppose otherwise that $\hat D_{j_k}(\zeta_{i_1q^{t_1-r}}X)=\hat D_{j_kq}(\zeta_{i_1q^{t_1-r+1}}^{-1}X)$.
This leads to $\zeta_{i_1q^{t_1-r}}=\zeta_{i_1q^{t_1-r+1}}^{-1}$, and hence
$\hat D_{j_k}(\zeta_{i_1q^{t_1-r}}X)=\hat D_{j_kq}(\zeta_{i_1q^{t_1-r}}X)$.
This is a contradiction,  since we would obtain
$\hat D_{j_k}(X)=\hat D_{j_kq}(X)$.
Each of the remaining  factors of $G_r(X)^*$ does likewise in turn, proving the claim.

Summarizing the discussions above, we have the following.
\begin{Theorem}
Let $\ell_1, \ell_2$ be distinct odd primes coprime with $q$, and
let $r_1, r_2$ be positive integers. Put ${\rm ord}_{\ell_i^{r_i}}(q)=2^{a_i}f_i$ with $\gcd(2, f_i)=1$,
$i=1,2$. We then have:

{\rm (i)}~ If $a_1=0$ (resp. $a_2=0$), then $|\Omega_{\ell_1^{r_1}\ell_2^{r_2}}|=|\Omega_{\ell_2^{r_2}}|$
{\rm(}resp. $|\Omega_{\ell_1^{r_1}\ell_2^{r_2}}|=|\Omega_{\ell_1^{r_1}}|${\rm)}.

Otherwise, there are two subcases:

{\rm (ii.a)}~If $a_1=a_2\geq1$, then $|\Omega_{\ell_1^{r_1}\ell_2^{r_2}}|=
\sum\limits_{d\,\mid\,{\ell_1^{r_1}\ell_2^{r_2}}}\frac{\phi(d)}{{\rm ord}_{d}(q)}$.

{\rm (ii.b)}~If $a_1\neq a_2$,   $a_1\geq1$ and $a_2\geq1$, then $|\Omega_{\ell_1^{r_1}\ell_2^{r_2}}|=
\sum\limits_{d_1=0}^{r_1}\frac{\phi(\ell_1^{d_1})}{{\rm ord}_{\ell_1^{d_1}}(q)}
+\sum\limits_{d_2=0}^{r_2}\frac{\phi(\ell_2^{d_2})}{{\rm ord}_{\ell_2^{d_2}}(q)}-1$.
\end{Theorem}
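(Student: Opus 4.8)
The plan is to count the self-reciprocal monic irreducible factors of $X^{\ell_1^{r_1}\ell_2^{r_2}}-1$ over $\mathbb{F}_q$ directly, handling the three cases separately and assembling the pieces prepared in the discussion above. I will repeatedly use two facts: the total number of monic irreducible factors of $X^{\ell_1^{r_1}\ell_2^{r_2}}-1$ over $\mathbb{F}_q$ equals $\sum_{d\mid \ell_1^{r_1}\ell_2^{r_2}}\frac{\phi(d)}{{\rm ord}_d(q)}$; and, whenever the relevant order is even, Remark~\ref{even} forces every irreducible factor of $X^{\ell^s}-1$ to be self-reciprocal, so that $|\Omega_{\ell^s}|$ coincides with its full factor count.

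For part (i), suppose $a_1=0$, i.e. ${\rm ord}_{\ell_1^{r_1}}(q)$ is odd; by the opening observation in the proof of Lemma~\ref{pairs} this is equivalent to ${\rm ord}_{\ell_1}(q)$ being odd. Applying Lemma~\ref{reduce} with $\ell=\ell_1$, $n=\ell_1^{r_1}\ell_2^{r_2}$ and $n_1=\ell_2^{r_2}$ yields at once $|\Omega_{\ell_1^{r_1}\ell_2^{r_2}}|=|\Omega_{\ell_2^{r_2}}|$; the case $a_2=0$ is symmetric. For part (ii.a), assume $a_1=a_2\geq1$ and set $i_0=2^{a_1-1}f_1f_2$. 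Raising $q^{2^{a_1-1}f_1}\equiv-1\pmod{\ell_1^{r_1}}$ to the (odd) power $f_2$ and $q^{2^{a_1-1}f_2}\equiv-1\pmod{\ell_2^{r_2}}$ to the (odd) power $f_1$, then combining via the Chinese Remainder Theorem, gives $q^{i_0}\equiv-1\pmod{\ell_1^{r_1}\ell_2^{r_2}}$. By Remark~\ref{even} every monic irreducible factor is then self-reciprocal, so $|\Omega_{\ell_1^{r_1}\ell_2^{r_2}}|$ equals the full count $\sum_{d\mid \ell_1^{r_1}\ell_2^{r_2}}\frac{\phi(d)}{{\rm ord}_d(q)}$.

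The substance is part (ii.b), and I may assume $1\leq a_1<a_2$, the opposite inequality following by interchanging the roles of $\ell_1$ and $\ell_2$ (the formula is symmetric). Starting from $X^{\ell_1^{r_1}\ell_2^{r_2}}-1=(X^{\ell_2^{r_2}}-1)\prod_{h=1}^{\rho}M_{i_h}(X^{\ell_2^{r_2}})$, I count the self-reciprocal $\mathbb{F}_q$-irreducible factors contributed by each block. Since $a_2\geq1$, Remark~\ref{even} makes $X^{\ell_2^{r_2}}-1$ contribute all $|\Omega_{\ell_2^{r_2}}|=1+\nu$ of its factors. It then remains to prove that each $M_{i_h}(X^{\ell_2^{r_2}})$ contributes \emph{exactly one} self-reciprocal irreducible factor over $\mathbb{F}_q$; granting this, the total is $(1+\nu)+\rho=|\Omega_{\ell_2^{r_2}}|+|\Omega_{\ell_1^{r_1}}|-1$, which is precisely the claimed formula, since $|\Omega_{\ell_1^{r_1}}|=1+\rho$ (again by $a_1\geq1$ and Remark~\ref{even}) and the two sums in the formula count exactly $|\Omega_{\ell_1^{r_1}}|$ and $|\Omega_{\ell_2^{r_2}}|$.

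The single contributed factor, together with the non-self-reciprocity of the remaining ones, is exactly what the long computation preceding the theorem establishes for $M_{i_1}(X^{\ell_2^{r_2}})$; the argument for a general $M_{i_h}$ is identical. Concretely, one descends to $\mathbb{F}_{q^{2^{a_1}f_1}}$, applies the substitution $X\mapsto\zeta_sX$ in the factorization (\ref{factorizationl}) of $X^{\ell_2^{r_2}}-1$, and recombines the resulting factors into $\mathbb{F}_q$-irreducible ones via the stated Galois criterion: the piece $\prod_{s\in C_{i_1}}(X-\zeta_s^{-1})$ is self-reciprocal, being a divisor of $X^{\ell_1^{r_1}}-1$, while each $G_r(X)$ satisfies $G_r(X)^*\neq G_r(X)$ and hence contributes nothing to $|\Omega|$. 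I expect this last non-self-reciprocity step to be the main obstacle: it relies on the strict inequality $a_1<a_2$, which guarantees that ${\rm ord}_{\ell_2^{r_2}}(q^{2^{a_1}f_1})$ is even and so that the factors $D_{j_kq^i}(X)$ are themselves self-reciprocal, and then on a term-by-term comparison showing that $\zeta_{i_1q^{t_1-r}}\notin\{1,-1\}$ forces $\hat D_{j_k}(\zeta_{i_1q^{t_1-r}}X)\neq\hat D_{j_kq}(\zeta_{i_1q^{t_1-r+1}}^{-1}X)$, which rules out any self-reciprocal $G_r$. Assembling the three cases completes the proof.
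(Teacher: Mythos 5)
Your proposal is correct and follows essentially the same route as the paper: case (i) via Lemma~5.5, case (ii.a) via exhibiting $q^{2^{a_1-1}f_1f_2}\equiv-1$ modulo $\ell_1^{r_1}\ell_2^{r_2}$, and case (ii.b) by showing each block $M_{i_h}(X^{\ell_2^{r_2}})$ contributes exactly one self-reciprocal factor, which is precisely the content of the discussion the paper "summarizes" in this theorem. Your explicit final tally $(1+\nu)+\rho=|\Omega_{\ell_2^{r_2}}|+|\Omega_{\ell_1^{r_1}}|-1$ only makes the paper's implicit arithmetic visible.
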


\noindent{\bf Acknowledgements}
The first and  third authors  thank NSFC for the support from  Grant No.~11171370.
The research of the first and second authors is also partially supported by Nanyang
Technological University's research grant number M4080456.

\end{document}